\newcommand{\E}{\mathbb{E}}
\newcommand{\Z}{\mathbb{Z}}
\newcommand{\str}{\mathit{stretch}}
\newcommand{\spread}{\mathit{spread}}
\newcommand{\FCB}{\mathit{FCB}}
\newcommand{\DP}{\mathsf{DP}}
\newtheorem{theorem}{Theorem}[section]
\newaliascnt{lemma}{theorem}
\newtheorem{lemma}[lemma]{Lemma}
\newaliascnt{corollary}{theorem}
\newtheorem{corollary}[corollary]{Corollary}
\author{ 
Glencora Borradaile\thanks{School of Electrical Engineering and Computer Science, Oregon State University.}
\and
Erin Wolf Chambers\thanks{Department of Computer Science, Saint Louis University.}
\and
David Eppstein\thanks{Department of Computer Science, University of California, Irvine.}
\and
William Maxwell\footnotemark[1]
\and
Amir Nayyeri\footnotemark[1]
}
\title{Low-stretch spanning trees of graphs with bounded width}\date{}
\begin{document}

\maketitle
\begin{abstract}
We study the problem of low-stretch spanning trees in graphs of bounded width: bandwidth, cutwidth, and treewidth.      
We show that any simple connected graph $G$ with a linear arrangement of bandwidth $b$ can be embedded
into a distribution $\mathcal T$ of spanning trees such that the expected stretch
of each edge of $G$ is $O(b^2)$. Our proof implies a linear time algorithm for
sampling from $\mathcal T$.  Therefore, we have a linear time algorithm that finds a
spanning tree of $G$ with average stretch $O(b^2)$ with high probability. We also describe a deterministic linear-time algorithm for computing a spanning tree of $G$ with average stretch $O(b^3)$.
For graphs of cutwidth $c$, we construct a spanning tree with stretch $O(c^2)$ in linear time.
Finally, when $G$ has treewidth $k$ we provide a dynamic programming algorithm computing a minimum stretch spanning tree of $G$ that runs in polynomial time with respect to the number of vertices of $G$.
\end{abstract}

\section{Introduction}
Let $G=(V,E)$ be an unweighted, connected graph with $m$ edges and $n$ vertices, and $T$ be any spanning tree of $G$. For any $(u, v) \in E$, the stretch of $(u, v)$ with respect to $T$ is $\str_T(u, v) = d_T (u, v)$, where
$d_T(u, v)$ denotes the length of the unique $u$-to-$v$ path in $T$. The stretch of $T$ is then defined to be $\str(T) = \frac{1}{m}\sum_{(u,v)\in E}{\str_T(u, v)}$. 

As minimal distance preserving structures, low-stretch spanning trees are a fundamental concept that have been studied extensively;
they have also found  applications in computer science in problems such as the $k$-server problem~\cite{Alon95GraphTheoGame}, minimum cost communication trees~\cite{Peleg98MinCommST}, and solving diagonally dominant linear systems \cite{MillerSDD}. 
Perhaps the first notable structural result is the paper by Alon et al.~\cite{Alon95GraphTheoGame}, where they show that any general graph has a spanning tree of stretch $O(\exp(\sqrt{\log n \log \log n}))$ and that there exist graphs with minimum stretch  $\Omega(\log n)$.  A series of papers~\cite{Elkin08LowStretchST, Abraham08NearTightStretchST, Koutis11KMPSolver, Abraham12PetalLowStretST} followed the result of Alon et al., culminating in the recent construction of Abraham and Neiman of an $O(\log n \log \log n)$ stretch spanning tree for general graphs, which is almost tight considering the  $\Omega(\log n)$ lower bound.  The existence of spanning trees with bounded average distortion is often implied by a stronger statement that the graph can be embedded into a distribution of spanning trees such that the expected stretch of any edge is bounded. 

Given these results for general graphs, a natural question is to consider restricted classes of graphs, both in terms of finding better bounds than general graphs for some classes of graphs, as well as finding lower bounds that match the general case in others.
 For example, we know that constant factor stretch spanning trees exist for $k$-outerplanar graphs: they have stretch $c^k$ for a constant $c$~\cite{Gupta04CutsTreesL1Emb, Emek11KOuterLowStretchST}.
On the lower bound side, we also know that grid graphs, which are planar, have a lower bound of $\Omega(\log n)$ on their stretch, so we cannot hope to get constant factor for this class.
Additionally, Gupta et al.~\cite{Gupta04CutsTreesL1Emb} found a family of bounded treewidth graphs (in fact, series parallel graphs) whose minimum stretch spanning trees have stretch  $\Omega(\log n)$. 

In light of these bounds, the search for families of graphs that might have smaller stretch must be limited to classes of graphs that exclude these examples.
In this regard, a natural and still-open question is whether bounded pathwidth graphs admit a spanning tree of sublogarithmic stretch.
In fact, we conjecture that bounded pathwidth graphs admit constant stretch spanning trees.  In this paper, we make progress towards this conjecture by showing this is true for bounded bandwidth (\autoref{thm:expected-stretch}) and bounded cutwidth graphs (\autoref{cw}); both classes are contained within the family of bounded pathwidth graphs.
More precisely, we prove:  
\begin{itemize}
\item For every $n$-vertex graph of bandwidth $b$ there exists a random distribution over spanning trees of the graph,
such that the expected stretch of any individual edge of the graph is $O(b^2)$. The random distribution can be sampled in linear time given a bandwidth-$b$ linear arrangement of the graph, or constructed explicitly in quadratic time.  
\item Under the same assumptions, a spanning tree $T$ of average stretch $O(b^3)$ can be constructed deterministically in linear time.
\item Every $n$-vertex graph of cutwidth $c$ has a spanning tree $T$ of average stretch $O(c^2)$. $T$ can be constructed from a cutwidth-$c$ linear arrangement of the graph in linear expected time.
\item We provide a dynamic programming algorithm computing the minimum stretch spanning tree of an unweighted graph with treewidth $k$. Our algorithm runs in $O(2^{3k} k^{2k} n^{k+1})$ time.
\end{itemize}
It is important to note that our algorithms require either a linear arrangement or a tree decomposition realizing the width as input, and computing such structures is NP-hard \cite{Papadmimitriou1976,Arnborg1987,Gavril1977}.

Lee and Sidiropoulos~\cite{Lee13PathwidthTreeRE} show that a bounded pathwidth graph admits an embedding into a distribution of trees with constant distortion.  In this paper, we conjecture that a similar result holds for embedding into a distribution of \emph{spanning} trees.  For embedding of bounded bandwidth graphs into normed spaces see Carrol et al.~\cite{Carroll06EmbBandwidthL1} and Bartal et al.~\cite{Bartal13BandwdithLowDimEmb}.

The key insight by which we obtain these results lies in the connection between spanning trees of low-stretch and 
 fundamental cycle bases of low weight.
Any spanning tree $T$ of $G$ naturally gives a fundamental cycle basis for $G$:
for each $e=(u, v) \in E\backslash T$, the basis contains the unique cycle in $T\cup \{e\}$.
The weight of this basis is defined to be the sum of the lengths of its cycles.
A graph $G$ has a spanning tree of average stretch $O(\log n)$ if and only if it has a fundamental cycle basis of weight $O(m\log n)$.
Similarly, a cycle basis of length $O(m)$ is equivalent to a spanning tree of stretch $O(1)$.
(The relationship between $T$'s stretch and fundamental cycle basis will be discussed in more detail in the next section.)

Shortest fundamental cycle bases have been studied as a basic structure of graphs and for their different applications in graph drawing~\cite{Fruchterman91GraphDraw}, electrical engineering~\cite{Bollobas98ModernGraphTheo}, chemistry~\cite{Gleiss01PhD}, traffic light planning~\cite{Kohler04MinTotalDelay}, periodic railway time tabling,~\cite{Liebchen07PriodicTimetable, Reich14PhD}, and kinematic analysis of mechanical structures~\cite{Cassell_1974}.

\section{Preliminaries}

\subsection{Cycle bases}

Given a simple, connected, unweighted graph $G$ with $n$ vertices and $m$ edges the {\em cycle space} of $G$ is an $m - n + 1$ dimensional vector space over $\Z_2$ that spans the cycles in $G$.
In this context a cycle in $G$ is any subgraph of $G$ with even degree.
We call a basis of this vector space a {\em cycle basis}, and the weight of a cycle basis is the sum of the lengths of the cycles in the basis.
Given a spanning tree $T$ of $G$ we call a cycle formed by adding a non-tree edge to $T$ a {\em fundamental cycle} with respect to $T$.
Every spanning tree $T$ of $G$ yields a basis of the cycle space using the fundamental cycles induced by the $m - n + 1$ edges in $G \setminus T$.
We call a basis of this form a {\em fundamental cycle basis}.
Each cycle in the fundamental cycle basis created by $T$ corresponds to exactly one edge in $G \setminus T$. We call this edge the {\em fundamental edge} of the cycle.

\subsection{Fundamental cycle bases and low-stretch spanning trees}

The weight of a fundamental cycle basis with respect to a tree $T$ is closely related to the stretch of $T$.
The {\em stretch} of an edge $e=(u, v)$ in $G$ with respect to $T$, denoted $\str_T(e)$,  is defined as the length of the unique $u$-to-$v$ path in $T$.
The stretch of $T$ is defined as the mean stretch of the edges, \[\str(T) = \frac{1}{m} \sum_{e \in E(G)} \str_T(e).\]
Let $\FCB(T)$ denote the weight of the fundamental cycle basis corresponding to $T$.
By observing that the length of a fundamental cycle induced by an edge $e$ is $\str_T(e) + 1$ we see that the fundamental cycle basis with respect to $T$ is related to the stretch of $T$ by 
\begin{align}
\label{eqn:FCBvsStretch}
\FCB(T) = m \cdot \str(T) + m - 2n + 2
\end{align}
It follows that $\FCB(T) = O(m)$ if and only if $\str(T) = O(1)$.

\subsection{Linear arrangements}
A bijective map $\phi \colon V(G) \rightarrow \{1,2,...,n\}$ is called a {\em linear arrangement} of $G$.
For any subset of vertices $S \subseteq V(G)$ if $s \in S$ maximizes $\phi$ restricted to $S$ we call it the {\em right endpoint} of $S$; similarly if $s$ minimizes $\phi$ restricted to $S$ we call it the {\em left endpoint} of $S$.
If $u$ and $v$ are the left and right endpoints of $S$ we define the {\em spread} of $S$ to be $\phi(v) - \phi(u)$.
For any vertex $v$ we call the sets $\{ u \in V(G) \mid \phi(u) < \phi(v)\}$ and $\{ u \in V(G) \mid  \phi(v) < \phi(u) \}$ the \emph{left and right sides} of $v$, respectively. 

\subsection{The arrangement tree}

Given a linear arrangement $\phi$ of $G$ the {\em arrangement tree} $A$ is defined as a balanced binary tree with the following two properties.
The leaves of $A$ are in bijection with $V(G)$ and each internal node $v$ is mapped to the subgraph of $G$ induced by the vertices corresponding to the descendent leaves of $v$. More specifically we construct $A$ as follows:
let $n$ be the number of vertices in $G$, and let $p$ be the largest power of two that is less than $n$.
Let the left subtree of $A$ be constructed recursively from the first $p$ vertices in the linear arrangement,
and let the right subtree be constructed recursively from the remaining $n-p$ vertices (\autoref{fig:arrangement-tree}).

\begin{figure}[!htb]
\centering\includegraphics[width=0.4\textwidth]{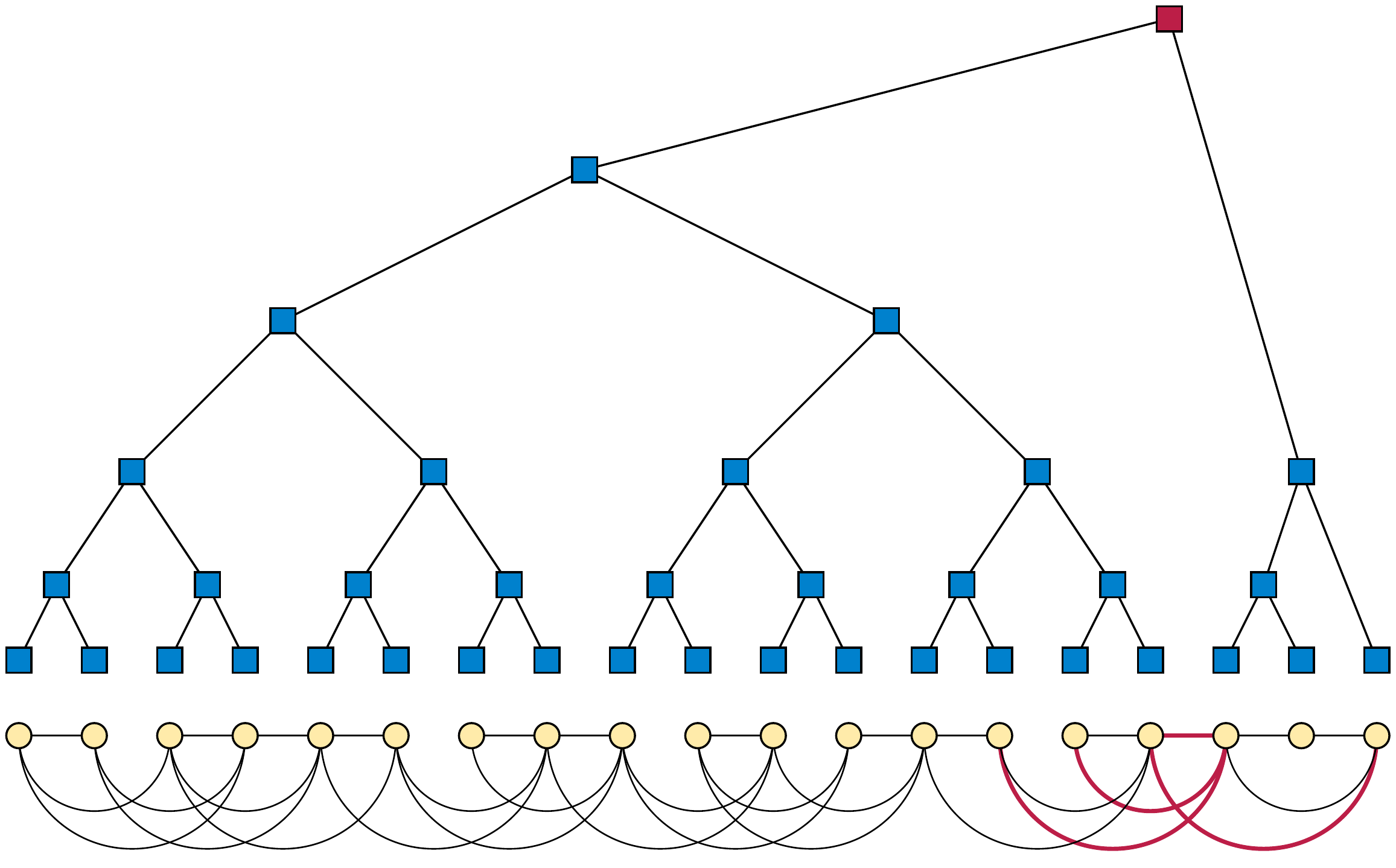}
\caption{Linear arrangement of a graph of bandwidth three and its arrangement tree. The root node and the edges split by the root node are marked in red.}
\label{fig:arrangement-tree}
\end{figure}

We denote the induced subgraph of the leaves descending from $v$ by $G_v$.
Consider the children $x$ and $y$ of $v$ in $A$. The induced subgraph $G_v$ has the form $G_v = G_x \cup G_y \cup S_v$ where $S_v$ is the set of edges connecting $G_y$ and $G_x$.  We call $S_v$ the set of edges {\em split} by $v$. Note that each edge is split by exactly one vertex.

\subsection{Bandwidth and cutwidth}
The {\em bandwidth} of a linear arrangement $\phi$ of a graph $G$ is defined as 
\[ \max_{(u, v) \in E(G)} |\phi(u) - \phi(v)|. \]
Note that $| \phi(u) - \phi(v) |$ is the spread of $(u, v)$ with respect to the arrangement tree arising from $\phi$.
The bandwidth of $G$ is the minimum bandwidth over all possible linear arrangements.
In a graph with bandwidth $b$ we have $\deg(v) \leq 2b$ for all $v \in V(G)$. Hence, when $b = O(1)$ we have $|E(G)| = O(n)$.
Consider the induced subgraph $G_v = G_x \cup G_y \cup S_v$ corresponding to node $v$ of $A$ with $x$ as the left and $y$ as the right child of $v$.
Any edge $(q, r) \in S_v$ with $q$ and $r$ from $G_x$ and $G_y$, respectively, has spread at most $b$.  So, if $r$ is $i$ positions away from the left endpoint of $G_y$ then $q$ is at most $b-i$ positions away from the right endpoint of $G_x$.
It follows that
\begin{equation}
\label{edge-bound}
|S_v| \leq \frac{1}{2}(b-1)(b-2) = O(b^2).
\end{equation}

The {\em cutwidth} of a linear arrangement $\phi$ of a graph $G$ is defined as \[\max_{i \in \Z} | \{ (u, v) \in E(G) \mid \phi(u) \leq i, \; \phi(v) \geq i+1 \}|.\]
The cutwidth of $G$ is the minimum cutwidth over all linear arrangements.
The cutwidth measures the number of edges that cross a fixed position in the linear arrangement.

\subsection{Tree decompositions}
A \textit{tree decomposition} of a graph $G$ is a tree $\mathcal{D} = (\mathcal{I}, \mathcal{E})$ where the vertex set $\mathcal{I}$ is in bijection with a collection $\{B_i\}_{i \in \mathcal{I}}$ of subsets of $V(G)$, called \textit{bags}, meeting the following conditions.

\begin{enumerate}
\item Every vertex $v \in V(G)$ is contained in some bag. That is, $\bigcup_{i \in \mathcal{I}} B_i = V(G)$.

\item For every edge $(u, v) \in E(G)$ there exists an $i \in \mathcal{I}$ with $u, v \in B_i$.

\item For all $v \in V(G)$ the subgraph induced by the set of bags containing $v$ is a tree.
\end{enumerate}
The \textit{width} of a tree decomposition is defined to be $\max_{i \in \mathcal{I}} |B_i| - 1$. The \textit{treewidth} of $G$ is the minimum width over all of its tree decompositions, denoted $k$.
We will use the notation $D(B)$ to refer to the set of vertices in the bag $B$ and the descendants of $B$. Similarly, by $A(B)$ we denote the set of vertices in $B$ and the ancestors of $B$.
We call a tree decomposition a \textit{nice tree decomposition} if it meets the following extra conditions.
\begin{enumerate}
\setcounter{enumi}{3}
\item $\mathcal{D}$ is a rooted binary tree.
\item If $i, j, k \in \mathcal{I}$ with $j$ and $k$ the children of $i$, then $B_i = B_j = B_k$.
\item If $j$ is the child of $i$ and $\deg(i) = 2$ then either $B_j \subset B_i$ and $|B_i| = |B_j|+1$ or $B_i \subset B_j$ and $|B_i| = |B_j|-1$.
\end{enumerate}
We call the parent bags satisfying property 5 \textit{join nodes}.
We call the parent bags satisfying the two conditions of property 6 \textit{introduce nodes} and \textit{forget nodes}, respectively.
Without loss of generality we may assume all tree decompositions are nice since any tree decomposition can be transformed into a nice tree decomposition in polynomial time \cite{nice}. Further, we may assume that every leaf bag contains only one vertex and the root bag is a forget node containing only one vertex.
\section{Spanning trees from linear arrangements}\label{sec:st}

Both our construction of a random family of spanning trees with low expected stretch on each edge and our construction of a deterministic spanning tree with low mean stretch will depend on a construction of spanning trees from arrangement trees, which we now describe.

Although we will use a different construction algorithm, our tree can be described as the one constructed by the following greedy algorithm:

\begin{algorithm}
  Given a graph $G$ and arrangement tree $A$:\\
  $T \leftarrow \emptyset$ \\
  for node $x \in A$ in leaf-to-root order:\\
  \qquad for edge $e \in S_x$ in increasing order by spread:\\
  \qquad \qquad if $T \cup \{e\}$ is acyclic, add $e$ to $T$. \\
  Return $T$.
    \caption{Spanning tree from a linear arrangement}\label{alg:st}
\end{algorithm}

This algorithm is simply Kruskal's algorithm for the minimum spanning tree of $G$, with each edge weighted by the height in the arrangement tree of the least common ancestor of the edge endpoints with ties broken by spread.
Because the result is a minimum spanning tree for these edge weights, we can construct the same tree by any other minimum spanning tree algorithm.
Finding the lowest common ancestor for all edges in $G$ can be done in $O(n)$ time~\cite{linearLCA}.
The algorithm of Fredman and Willard~\cite{int-mst}, which finds a minimum spanning tree  of a graph with integer weights in $O(n)$ time, implies that our algorithm can be implemented in linear time.

\begin{lemma}
\label{lem:mst-edge-stretch}
Let $e$ be an arbitrary edge of $G$, let $i$ and $j$ (with $1\le i<j\le n$) be the positions of the endpoints of $e$ in the linear arrangement, and let $p$ be the largest power of two that divides an integer in the half-open interval $[i,j)$.
Then the stretch of $e$ in the tree constructed as above is $O(p)$.
\end{lemma}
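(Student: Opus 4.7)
The plan is to recognize $T$ as the output of Kruskal's algorithm on edge weights (LCA height in $A$, spread) ordered lexicographically, as the paper observes just above the lemma, and then to analyze the $u$-to-$v$ path in $T$ by a standard Kruskal exchange argument. Write $z$ for the LCA of the endpoints of $e$ in $A$ and $h_z$ for its height. If $e\in T$ the stretch is $1$ and we are done. Otherwise $e$ was rejected because it would close a cycle, so at the moment $e$ is processed its endpoints already lie in the same tree of the current forest; since Kruskal never removes edges and never adds an edge internal to an existing component, the path from $u$ to $v$ in that tree is exactly the $u$-to-$v$ path in the final $T$, and it suffices to bound its length.

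The key invariant I will establish is that at the moment $e$ is processed, the component of the current forest containing $u$ is contained in $V(G_z)$. Every edge added before $e$ has its LCA $w$ in $A$ satisfying either $h_w<h_z$, or $h_w=h_z$ with spread smaller than that of $e$; by the nested-or-disjoint structure of arrangement-tree ranges, such a $w$ is either a descendant of $z$ (so $V(G_w)\subseteq V(G_z)$), equal to $z$, or satisfies $V(G_w)\cap V(G_z)=\emptyset$. A short induction along any forest path starting at $u$, with $u\in V(G_z)$ as the base case, shows the path cannot leave $V(G_z)$: an edge with an endpoint inside $V(G_z)$ cannot lie in a $V(G_w)$ disjoint from $V(G_z)$. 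Hence the $u$-to-$v$ path in $T$ uses at most $|V(G_z)|-1$ edges.

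It remains to prove $|V(G_z)|=O(p)$, which is where the definition of $p$ via divisibility enters. Let $[a,b]$ be the leaf range of $z$, set $s_z=b-a+1$, and let $p'$ be the largest power of two strictly less than $s_z$, so that the split position of $z$ is $s=a+p'-1$ and $p'<s_z\le 2p'$. Walking from the root of $A$ down to $z$, each right-descent at an ancestor $w$ adds the left-subtree size $p_w$ (a power of two) to $a-1$, and immediately after such a descent the subtree size drops below $p_w$, so every subsequent left-subtree size, including $p'$ itself, is strictly smaller than $p_w$. Consequently every power of two contributing to $a-1$ is at least $2p'$, so $2p'\mid a-1$, and therefore $s=(a-1)+p'$ is divisible by $p'$ but not by $2p'$; in other words, $p'$ is exactly the largest power of two dividing $s$. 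Because $e$ crosses the split of $z$ we have $s\in[i,j)$, so $p\ge p'$ and $|V(G_z)|=s_z\le 2p'\le 2p$, giving stretch at most $2p-1=O(p)$. The delicate point is this divisibility argument, which has to handle the possibly non-perfect shape of $A$ when $n$ is not a power of two.
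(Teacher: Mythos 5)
Your proof is correct and follows the same approach as the paper's: identify the arrangement-tree node $z$ with $e\in S_z$, use the greedy (Kruskal) property to confine the $u$-to-$v$ tree path inside $V(G_z)$, and then bound $|V(G_z)|\le 2p$. The paper asserts the bound on $|V(G_z)|$ as immediate from the construction, whereas you supply the divisibility argument (showing $2p'\mid a-1$ so the split position has $p'$ as its exact $2$-adic part) that actually justifies it when $n$ is not a power of two; this is a welcome elaboration rather than a different route.
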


\begin{proof}
Let $v$ be the node of the arrangement tree with $e\in S_v$. From our construction of the arrangement tree it follows that the number of leaf descendants of $v$ is at least $p+1$ and at most $2p$.
By the greedy algorithm for the construction of a spanning tree, the spanning tree contains a path connecting the endpoints of $e$ within these at most $2p$ descendants, for otherwise $e$ itself would have been added to the spanning tree. Therefore, the stretch of $e$ is at most $2p-1$.
\end{proof}

\section{Embedding into a distribution of trees}

Let $G$ be any graph having a linear arrangement $\phi$ of bandwidth $b$. In this section, we construct a random distribution over spanning trees $\mathcal T$ of $G$ with the property that each edge of $G$ has expected stretch $O(b^2)$.
That is, for an arbitrary edge $e$ (chosen independently from the construction of $\mathcal T$) we have $\mathbb{E}_{\mathcal T}[\str(e)] = O(b^2)$.
A single tree from the distribution can be sampled in time $O(n)$, and the entire distribution can be constructed explicitly in time $O(n^2)$.

Let $n$ be the number of vertices in $G$,
and let $n'$ be the smallest power of two greater than or equal to $2n$ (so, $n' = \Theta(n)$).
Let $G'$ be formed from $G$ by adding $n' - n$ isolated vertices.
Consider the $n'-n \ge n$ different linear arrangements $\phi_i$ of $G'$ obtained from the linear arrangement $\phi$ of $G$ by placing $i$ isolated vertices before the vertices of $G$ and $n' - n -i$ vertices after the vertices of $G$ (for $0\leq i\leq n'-n$).
Denote the collection of arrangement trees of these linear arrangements by $\mathcal{A} = \{A_i\}_{i=1}^{n' - n}$.  For each arrangement tree $A_i \in \mathcal{A}$, Algorithm~\ref{alg:st} produces a tree $T_i$.  Our random distribution $\mathcal T$ is generated by choosing $i$ uniformly at random and, based on that choice, selecting tree $T_i$.

Given a fixed choice of edge $e$, define $\ell(A_i)$ to be the node $v$ of the arrangement tree $A_i$
such that $e\in S_v$ (that is, the endpoints of $e$ are in distinct children of $v$). Given two arrangement trees $A_i$ and $A_j$ we say $A_i \equiv A_j$ if the rightmost leaf descendants of the left children of $\ell(A_i)$ and $\ell(A_j)$ are equal.
That is, $A_i \equiv A_j$ are equivalent if and only if $e$ is split in the same position of the linear arrangements $\phi_i$ and $\phi_j$.
Note that $\equiv$ is an equivalence relation that is defined with respect to a fixed $e$.

Therefore, we can calculate the expected spread of $e$ by concentrating only on a single equivalence class $[A]$ of $\equiv$. Since the bound holds for every equivalence class, the same expected spread will hold for our entire random distribution, by averaging over the equivalence classes.

Given an arrangement tree $A_i$ (chosen from a fixed equivalence class $[A]$) let $v_i$ be the node of $A_i$ such that $e\in S_{v_i}$ (that is, $v_i$ splits $e$), and let $h_i$ be the height of $v_i$ in the arrangement tree.
Then for all $A_j$ in the same equivalence class with $h_i=h_j$, we have $G_{v_i}=G_{v_j}$ and the edges in this induced subgraph have the same minimum spanning tree weights, so they also have
$T_i\cap G_{v_i}=T_j\cap G_{v_j}$.
Within these two subtrees these nodes have the same two paths connecting the endpoints of $e$. Because this path depends only on the height $h_i$ and not on $i$ itself, we denote it $P_{h_i}$.
Different heights may have the same associated paths.
We say that $h_i$ is a \emph{critical height} if $P_{h_i}\ne P_{h_i-1}$; that is, if $h_i$ is the lowest height that gives rise to its path.

\begin{lemma}\label{lem:ncritical}
For a fixed choice of edge $e$ and equivalence class $[A]$ there are $O(b)$ critical heights.
\end{lemma}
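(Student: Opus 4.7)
The plan is to show that, inside a fixed equivalence class, $P_h$ can change with $h$ only when the single ``bridging'' edge of $S_{v_h}$ used at the top level of the recursive MST construction changes, and that this bridging edge can only change for $h$ up to roughly $\log_2 b$, giving $O(\log b)\subseteq O(b)$ critical heights.

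For each valid $h$, write $L_h$ and $R_h$ for the subgraphs induced by the two children of the splitter $v_h$, and let $f_h\in S_{v_h}$ be the smallest-spread edge of $S_{v_h}$ under the fixed tiebreaker. Since every edge inside $L_h$ or $R_h$ has LCA strictly below $v_h$ and is therefore processed by Kruskal strictly before any edge of $S_{v_h}$, the restriction of $T_i$ to $G_{v_h}$ equals $\mathrm{MST}(L_h)\cup \mathrm{MST}(R_h)\cup\{f_h\}$, and so $P_h$ is the $u$-to-(left endpoint of $f_h$) path in $\mathrm{MST}(L_h)$, followed by $f_h$, followed by the (right endpoint of $f_h$)-to-$v$ path in $\mathrm{MST}(R_h)$. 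I will also use the nestings $L_{h-1}\subseteq L_h$, $R_{h-1}\subseteq R_h$, and $S_{v_{h-1}}\subseteq S_{v_h}$, all of which are immediate from the ``symmetric interval around $u^*$'' description of the two child intervals.

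First I will show that $P_h=P_{h-1}$ whenever $f_h=f_{h-1}$. The common bridging edge has endpoints in $L_{h-1}$ and $R_{h-1}$; because Kruskal's algorithm on nested subgraphs with consistent ordering is monotone, $\mathrm{MST}(L_{h-1})\subseteq \mathrm{MST}(L_h)$ and likewise on the right, and uniqueness of paths in a tree then forces the two subpaths of $P_h$ to agree with the corresponding subpaths of $P_{h-1}$. Conversely, $f_h$ is the unique edge of $P_h$ crossing $u^*$, so distinct bridging edges give distinct paths. Next, if $f_h\neq f_{h-1}$ then $f_h\in S_{v_h}\setminus S_{v_{h-1}}$, since otherwise $f_h$ would already be the tie-broken minimum of the smaller set and hence equal $f_{h-1}$.

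Finally, any edge of $S_{v_h}\setminus S_{v_{h-1}}$ has at least one endpoint in one of the two ``rings'' $L_h\setminus L_{h-1}$ or $R_h\setminus R_{h-1}$, each of which lies at $\phi$-distance at least $2^{h-2}$ from $u^*$, while the other endpoint lies strictly on the opposite side of $u^*$; so its spread is at least $2^{h-2}+1$. Combined with the bandwidth bound spread $\leq b$, this forces $h\leq \log_2(b-1)+2$, yielding $O(\log b)=O(b)$ critical heights. The main obstacle I anticipate is carefully justifying the monotonicity $\mathrm{MST}(L_{h-1})\subseteq \mathrm{MST}(L_h)$ used in the first step: the key observation is that edges entirely inside $L_{h-1}$ receive the same relative Kruskal ranking under either arrangement (they are ranked by the structure of the common balanced subtree on the vertices of $L_{h-1}$), so Kruskal produces the same spanning tree on them and then only extends it while processing the new edges in $L_h\setminus L_{h-1}$.
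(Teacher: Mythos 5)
Your proof rests on the claim that, within a class, the restriction of $T_i$ to $G_{v_h}$ has the form $\mathrm{MST}(L_h)\cup\mathrm{MST}(R_h)\cup\{f_h\}$ with a \emph{single} bridging edge $f_h$ chosen as the smallest-spread member of $S_{v_h}$. That is not how the greedy construction behaves: $L_h$ and $R_h$ need not be connected induced subgraphs, so Kruskal's pass over $S_{v_h}$ may add several edges of $S_{v_h}$ to bridge multiple components, and \emph{which} edges of $S_{v_h}$ are added depends on the component structure of the forests on $L_h$ and $R_h$, not just on the spread ordering. In particular, the crossing edge actually used by the $u$--$v$ path can be an edge of $S_{v_h}$ other than the minimum-spread one, the path can cross the cut $L_h\,|\,R_h$ several times, and---crucially---the path can change without any change in $S_{v_h}$ at all: a new non-split edge appearing in $L_h\setminus L_{h-1}$ can merge two components of $L_{h-1}$, knock a previously-used split edge out of the MST (as in the paper's own case analysis of $e'\in T_{v_j}\cap S_{v_j}$, $e'\notin T_{v_i}$), and reroute $P_h$. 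Your argument that ``$P_h$ changes only when $f_h$ changes'' therefore does not hold, and the step from there to $h\le\log_2(b-1)+2$ loses its footing.

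A secondary symptom of the gap: your reasoning, if correct, would give $O(\log b)$ critical heights and hence $O(b\log b)$ expected stretch, strictly stronger than the paper's $O(b^2)$. The paper only establishes $O(b)$ critical heights, and its proof is built precisely around what yours skips: it tracks the $O(b)$ split edges that live near the split position (they form an acyclic subgraph on the $O(b)$ vertices within distance $b$ of the split), and argues that each can be \emph{excluded} from the spanning tree at a critical height at most once and, once excluded, never re-enters. Since every critical height must exclude at least one such split edge, there are $O(b)$ of them. If you want to repair your approach, you would have to bound the number of times the path's crossing edge (or crossing-edge set) can change; the realistic bound for that is $O(b)$, driven by component merges in $L_h$ and $R_h$, not $O(\log b)$ driven by spread thresholds.

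Your observations that $L_{h-1}\subseteq L_h$, $R_{h-1}\subseteq R_h$, $S_{v_{h-1}}\subseteq S_{v_h}$, that new edges of $S_{v_h}\setminus S_{v_{h-1}}$ have spread at least $2^{h-2}+1$, and that the weights of edges interior to $G_{v_{h-1}}$ are preserved (so $T_{v_{h-1}}\setminus S_{v_{h-1}}\subseteq T_{v_h}$) are all correct and are the same ingredients the paper uses; the gap is only in the single-bridging-edge simplification and the conclusion drawn from it.
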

\begin{proof}
Let $A_i,A_j \in [A]$ be arrangement trees that split the edge $e$ at vertices $v_i$ and $v_j$, respectively. Further, we assume $v_i$ and $v_j$ are at heights $h_i$ and $h_j = h_i - 1$ where $h_i$ is a critical height.
We denote the spanning trees produced by Algorithm~\ref{alg:st} with input $A_i$ and $A_j$ by $T_i$ and $T_j$.
The associated induced subgraphs are related by the inclusion $G_{v_j} \subset G_{v_i}$.

We now describe the ways in which $T_{v_i} = T_i \cap G_{v_i}$ can differ from $T_{v_j} = T_j \cap G_{v_j}$.
By the construction of the equivalence relation every edge split by $v_j$ is also split by $v_i$, that is $S_{v_j} \subset S_{v_i}$.
The edges in $T_{v_j} \setminus S_{v_j}$ must be included in $T_{v_i}$ since their weights are the same in both $A_i$ and $A_j$. This is because in the linear arrangement $G_{v_i}$ adds an equal number of vertices to the left and right of $G_{v_j}$ and this number is equal to a power of two.
It follows that $T_{v_i}$ differs from $T_{v_j}$ by the addition of non-split edges, the potential addition of split edges, and the potential removal of split edges.

Consider the case when there exists some edge $e' \in T_{v_j} \cap S_{v_j}$ but $e' \notin T_{v_i}$.
The edge $e'$ was added to $T_{v_j}$ by Algorithm~\ref{alg:st} because it connected to previously disconnected components of $G_{v_j}$.
These connected components must have already been contained in a larger connected component of $G_{v_i}$, since otherwise Algorithm~\ref{alg:st} would have picked $e'$ for $T_{v_i}$.
It follows that these connected components must have been connected by the addition of a non-split edge not contained in $T_{v_j}$.

When $e' \in T_{v_i} \cap S_{v_i}$ but not in $T_{v_j}$ then $e'$ must contain an endpoint outside of $G_{v_j}$.
Since there are $O(b)$ vertices within $b$ positions away from the split point, and once a critical height excludes a split edge it cannot be reintroduced to the spanning tree, we see that at most $O(b)$ split edges can be added across all critical heights.

The height $h_i$ can only be a critical height if $T_{v_i} \cap G_{v_j}$ differs from $T_{v_j}$, otherwise $P_{h_i} = P_{h_j}$.
Hence, $h_i$ can only be a critical height if $T_{v_i}$ excludes a split edge appearing in $T_{v_j}$.
The number of split edges at the smallest critical height is $O(b)$ because these edges form an acyclic subgraph on the $O(b)$ vertices within $b$ positions away from the split point.
Since an edge can be excluded from the spanning tree at a critical height at most once we conclude that there are $O(b)$ critical heights.
\end{proof}

\begin{theorem}
\label{thm:expected-stretch}
For an arbitrary edge $e$  (chosen independently from the construction of $\mathcal T$)
the expected stretch of $e$ is $O(b^2)$.
\end{theorem}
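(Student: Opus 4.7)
The plan is to bound $\sum_{i=0}^{n'-n} \str_{T_i}(e)$ from above and divide by $n'-n+1 = \Theta(n)$. I would first partition the indices $i$ according to the equivalence classes of $\equiv$, because Lemma~\ref{lem:ncritical} applies within a single class. Each class corresponds uniquely to a position $p$ of a real vertex in $\phi$, with $p$ ranging over $\{i_e, i_e+1, \ldots, j_e-1\}$, where $i_e < j_e$ are the positions of $e$'s endpoints in $\phi$ and $d := j_e - i_e \le b$: since $n'$ is a power of $2$, each $A_i$ is a complete binary tree on $n'$ leaves, so the split position in $\phi_i$ is the unique integer of maximum $2$-adic valuation in $[i+i_e, i+j_e)$. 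For every $A_i$ in the class indexed by $p$, the split height therefore equals $h_i = v_2(i+p) + 1$, where $v_2$ denotes the $2$-adic valuation.

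Within a single class, the core step is to bound $\sum_i |P_{h_i}|$. By Lemma~\ref{lem:mst-edge-stretch}, $|P_h| = O(2^h)$; and since $P_h$ changes only at critical heights, $|P_{h_i}| = |P_{h_c}| = O(2^{h_c})$, where $h_c$ is the largest critical height at most $h_i$. The number of indices $i \in \{0,\ldots,n'-n\}$ with $h_i \ge h_c$ is the number of multiples of $2^{h_c-1}$ in a range of length $\Theta(n)$, which is $O(n/2^{h_c})$. Combining this with Lemma~\ref{lem:ncritical}'s bound $C = O(b)$ on the number of critical heights per class,
\[
\sum_{i \,:\, A_i \in [A]} |P_{h_i}| \;\le\; \sum_{c=1}^{C} O(2^{h_c}) \cdot O(n/2^{h_c}) \;=\; O(bn).
\]
Summing over at most $d \le b$ classes gives $\sum_i \str_{T_i}(e) = O(b^2 n)$, and dividing by $n'-n+1 \ge n$ yields expected stretch $O(b^2)$.

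The main obstacle is to notice the cancellation $2^{h_c} \cdot (n/2^{h_c}) = n$ that makes the $O(b)$ critical heights telescope to a total contribution of $O(bn)$ per class, regardless of how large individual $h_c$ may be (they can reach $\log n$). Once that cancellation is identified, the remainder is bookkeeping: identifying each class with a $2$-adic-valuation condition on $i+p$ and invoking the two preceding lemmas.
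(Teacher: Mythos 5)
Your proof is correct and uses essentially the same approach as the paper: decompose by the equivalence classes of $\equiv$, invoke \autoref{lem:mst-edge-stretch} for $|P_h|=O(2^h)$ and \autoref{lem:ncritical} for the $O(b)$ bound on critical heights, and exploit the cancellation $2^{h_c}\cdot O(n/2^{h_c})=O(n)$. The only difference is cosmetic: you bound the raw sum $\sum_i \str_{T_i}(e)$ over all indices and divide by $\Theta(n)$ at the end, whereas the paper computes the conditional expectation within each class and then averages over classes.
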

\begin{proof}
Let $[A]$ be any equivalence class of the equivalence relation $\equiv$, and let $\str_{[A]}(e)$ denote the expected stretch of $e$ over all arrangement trees from the class $[A]$.  Also, let $h_1 < h_2 < \ldots < h_k$ be the critical heights of $e$ in $[A]$.  Finally, let $v$ be the (random) vertex in the arrangement tree that splits $e$, and let $H_v$ be the random variable of $v$'s height.
We have
\[
\E[\str_{[A]}(e)] = \sum_{i=1}^{k}{len(P_{h_i})\cdot Pr[P_{h_i}]},
\]
where $Pr[P_{h_i}]$ is the probability that $P_{h_i}$ is the path connecting the endpoints of $e$ in the (randomly) selected tree.  It follows, by the definition of critical heights, that
\[
Pr[P_{h_i}] = Pr[h_i \leq H_v < h_{i+1}] \leq Pr[h_i \leq H_v] = O(\spread(e)/2^{h_i}).
\]
In addition, we have $len(P_{h_i}) = O(2^{h_i})$ by \autoref{lem:mst-edge-stretch}.
Putting everything together, we have 
\[
\E[\str_{[A]}(e)] = \sum_{i=1}^{k}{O(2^{h_i})\cdot O(\spread(e)/2^{h_i})} = O(k\cdot \spread(e)) = O(b^2) ,
\]
as $k = O(b)$ by \autoref{lem:ncritical}, and $\spread(e)\le b$ by the definition of bandwidth.

Since $\str(e)$ is a weighted average of $\str_{[A]}(e)$ for different classes $[A]$, and $\str_{[A]}(e) = O(b^2)$ for all classes $[A]$, we conclude that $\str(e) = O(b^2)$.
\end{proof}

\section{Bounded cutwidth}

A theorem from Chung \cite{fan} says that for any graph $G$ with cutwidth $c$ there exists a subdivision of $G$ with bandwidth $c$. However, this entails expanding the number of edges by a factor of $c$, so combining this with our construction of low-stretch spanning trees for low-bandwidth graphs would give us a tree with average stretch $O(c^3)$. In this section we provide a direct construction that obtains stretch $O(c^2)$.
The proof of \autoref{cw} is almost identical to that of \autoref{thm:expected-stretch}, however since we do not have the inequality $\spread(e) \leq c$ we instead compute the expected stretch of the tree rather than the expected stretch of a single edge.
\begin{theorem}\label{cw}
A graph $G$ with cutwidth $c$ has a spanning tree with expected stretch $O(c^2)$.
\end{theorem}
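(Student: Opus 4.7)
The plan is to imitate the proof of \autoref{thm:expected-stretch} but, because the per-edge bound $\spread(e) \le b$ is unavailable, work at the level of the whole tree rather than per edge. Concretely, I would reuse the same random family $\mathcal T$: take $n'$ to be the smallest power of two $\ge 2n$, pad $G$ with $n'-n$ isolated vertices placed before and after the original arrangement to obtain $n'-n$ shifted arrangements $\phi_i$, feed each into \autoref{alg:st} to get a tree $T_i$, and sample $T_i$ uniformly in $i$. Fix an edge $e$, condition on an equivalence class $[A]$ under $\equiv$, and let $h_1 < \cdots < h_k$ be the critical heights of $e$ in $[A]$.

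The first step is a cutwidth analog of \autoref{lem:ncritical}: for a fixed $e$ and $[A]$, the number of critical heights is $k = O(c)$. Because every $v_i$ (the node splitting $e$ in $A_i$) shares the same split position in the linear arrangement, the set of edges $S_{v_i}$ is exactly a subset of the cut at that position, and so $|S_{v_i}| \le c$ for every $i$. The chain of inclusions $S_{v_{h_1}} \subseteq S_{v_{h_2}} \subseteq \cdots \subseteq S_{v_{h_k}}$ used in the proof of \autoref{lem:ncritical} therefore has total size at most $c$, and the remainder of that argument goes through verbatim: each critical height must exclude from the spanning tree at least one split edge that was present at the previous critical height, once excluded a split edge is not reintroduced, so $k \le c$. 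This is actually simpler than the bandwidth case because we no longer need the $O(b)$ acyclicity argument; the cutwidth bound on $|S_v|$ is direct.

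Given $k = O(c)$, the exact computation in the proof of \autoref{thm:expected-stretch} yields, for each edge,
\[
\E\bigl[\str_{[A]}(e)\bigr] \;=\; \sum_{i=1}^{k} O(2^{h_i})\cdot O\!\bigl(\spread(e)/2^{h_i}\bigr) \;=\; O\!\bigl(c\cdot \spread(e)\bigr),
\]
and averaging over classes gives $\E[\str(e)] = O(c\cdot \spread(e))$. To bound $\E[\str(T)] = \tfrac1m \sum_e \E[\str(e)]$, I use the global identity $\sum_e \spread(e) = \sum_{i=1}^{n-1} |\{e : e \text{ crosses position } i\}| \le c(n-1)$, the only place where cutwidth enters aggregately rather than per edge. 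Combining,
\[
\E[\str(T)] \;\le\; \frac{O(c)}{m}\sum_{e} \spread(e) \;=\; \frac{O(c)\cdot c(n-1)}{m} \;=\; O(c^{2}),
\]
using $m \ge n-1$ since $G$ is connected. The main obstacle is precisely the step where we lose the uniform per-edge bound: in bandwidth we had $\spread(e) \le b$ so each individual edge inherited expected stretch $O(b^2)$, but under cutwidth individual edges can have spread as large as $n$, and the only way to recover $O(c^2)$ is to aggregate, trading the per-edge guarantee for an average stretch guarantee on the tree.
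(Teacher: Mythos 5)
Your proposal is correct and follows essentially the same route as the paper's own proof: reuse the random distribution over shifted arrangement trees from \autoref{thm:expected-stretch}, bound the number of critical heights by $O(c)$ using that every $S_{v_i}$ in a fixed equivalence class is a subset of the cut at the common split position and hence has size at most $c$, obtain $\E[\str_{[A]}(e)] = O(c\cdot\spread(e))$, and then aggregate over edges via the identity $\sum_e \spread(e) \le cn$ to conclude $\E[\str(T)] = O(c^2)$. Your only addition is spelling out the nested chain $S_{v_{h_1}} \subseteq \cdots \subseteq S_{v_{h_k}}$, which the paper compresses into the one-line observation that at $h_k$ there are at most $O(c)$ split edges; both yield the same $k = O(c)$ bound.
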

\begin{proof}
We apply the same construction for a random distribution of spanning trees as in \autoref{thm:expected-stretch} to a linear arrangement of $G$ with cutwidth $c$.
We show that the expected stretch a spanning tree produced by Algorithm~\ref{alg:st} on a randomly chosen arrangement tree from the distribution is $O(c^2)$. Therefore, there exists a spanning tree with stretch at least as good as this expected value.

As before, we fix an equivalence class of arrangement trees $[A]$ from our random distribution. Let $h_1< h_2 < \dots<h_k$ denote the critical heights of $[A]$.
Since at $h_k$ there are at most $O(c)$ split edges, we can conclude that there are at most $O(c)$ critical heights. 
As in the proof of \autoref{thm:expected-stretch}, for a fixed edge $e$ the expected stretch is given by \[\mathbb{E}[\str_{[A]}(e)] = \sum_{i=1}^k len(P_{h_i}) \cdot Pr[P_{h_i}].\]
We have that $Pr[P_{h_i}] = O(\spread(e)/2^{h_i})$ and $len(P_{h_i}) = O(2^{h_i})$, hence $\mathbb{E}[\str_{[A]}(e)] = O(c \cdot \spread(e))$.
Let $T$ be the spanning tree constructed by Algorithm~\ref{alg:st} from the randomly selected arrangement tree. We compute the expected stretch of $T$ by
\[\mathbb{E}[\str(T)] = \frac{1}{m} \sum_{e \in E(G)} O(c \cdot \spread(e)). \]
Note that $\sum_{e \in E(G)} \spread(e) \leq cn$ since by the definition of cutwidth at most $c$ edges cross any given interval in the linear arrangement.
Hence, $\mathbb{E}[\str(T)] = O(c^2)$.

\end{proof}

\begin{corollary}\label{cw-cor}
Any graph with cutwidth $c$ has a fundamental cycle basis with weight $O(c^2n)$.
\end{corollary}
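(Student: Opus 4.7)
The plan is to combine \autoref{cw} with the identity $\FCB(T) = m\cdot \str(T) + m - 2n + 2$ from \eqref{eqn:FCBvsStretch}. A naive substitution of $\str(T) = O(c^2)$ would only yield $O(m c^2) = O(c^3 n)$, because a cutwidth-$c$ graph can have $m$ as large as $cn$. So I first plan to look inside the proof of \autoref{cw} to extract a bound on the \emph{sum} $\sum_{e} \str_T(e)$ rather than on its average $\str(T)$.

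Inside that proof, the per-edge expected bound $\E[\str_{[A]}(e)] = O(c\cdot \spread(e))$ is established for every equivalence class $[A]$ and hence for the randomly sampled tree $T$ from the whole distribution. By linearity of expectation,
\[
\E\Big[\sum_e \str_T(e)\Big] \;=\; O(c)\cdot \sum_e \spread(e) \;=\; O(c^2 n),
\]
where the second equality is the observation already made in the proof of \autoref{cw}: at most $c$ edges cross any cut position and each edge $e$ crosses exactly $\spread(e)$ positions, so $\sum_e \spread(e) \leq c(n-1)$. Consequently some fixed spanning tree $T$ in the distribution satisfies $\sum_e \str_T(e) = O(c^2 n)$.

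It remains to plug this into \eqref{eqn:FCBvsStretch}. Since every edge has spread at least one, $m \leq \sum_e \spread(e) = O(cn)$, so
\[
\FCB(T) \;=\; \sum_{e} \str_T(e) \;+\; m - 2n + 2 \;=\; O(c^2 n) + O(cn) \;=\; O(c^2 n),
\]
which proves the corollary. I do not foresee a real obstacle; the only subtlety is resisting the temptation to invoke \autoref{cw} as a black box in the form $m\cdot \str(T)$, which would cost an extra factor of $c$ via $m = O(cn)$. Extracting the sum directly from the interior of the theorem's proof is what makes the bound come out as $O(c^2 n)$ rather than $O(c^3 n)$.
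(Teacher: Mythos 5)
Your proof is correct, and you have identified a genuine subtlety that the paper glosses over by stating the corollary without proof: a black-box application of \autoref{cw} together with \eqref{eqn:FCBvsStretch}, substituting $\str(T)=O(c^2)$ and $m=O(cn)$, gives only $O(c^3 n)$. The resolution you find is precisely what is required: the interior of the proof of \autoref{cw} actually establishes the sharper bound $\E\bigl[\sum_e \str_T(e)\bigr] = O(c)\cdot\sum_e\spread(e) = O(c^2 n)$ (equivalently, $\E[\str(T)] = O(c^2 n/m)$), after which averaging yields a fixed $T$ in the support with $\sum_e \str_T(e)=O(c^2 n)$, and then $\FCB(T)=\sum_e\str_T(e)+m-2n+2 = O(c^2 n)+O(cn) = O(c^2 n)$. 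This is surely the argument the authors intended; the corollary is not quite an immediate consequence of the theorem's statement alone, and your write-up makes that explicit.
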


Because this method produces high expected stretch for edges of high spread,
it is not clear how to strengthen this result to obtain a distribution with low-stretch for each edge,
as we did for bandwidth. We leave the question of whether this is possible as open for future research.

\section{Deterministic low-stretch spanning tree}
In this section we show that given a graph $G$ and a linear arrangement $\phi$ with bandwidth $b$ Algorithm~\ref{alg:st} produces a spanning tree of $G$ with stretch $O(b^3)$. As stated in Section~\ref{sec:st} such a spanning tree can be constructed in linear time.
Throughout this section we denote the spanning tree produced by Algorithm~\ref{alg:st} by $T$, and we denote the arrangement tree arising from $\phi$ by $A$.

We use a charging scheme to pay for the cycles in the fundamental basis created by our spanning tree algorithm.
Each fundamental cycle with sufficiently large spread will be assigned a charge. Moreover, the sum of the charges is an upper bound on the sum of the lengths of the cycles.


We are now ready to define the key component to our charging scheme.
Let $x$ be a node in $A$. A {\em long component} of $G_x$ is a connected component of $G_x$ that includes at least one vertex within distance $b$ of each endpoint in the linear arrangement of $G_x$. The number of long components in $G_x$ will be denoted with $\ell_x$.

\begin{lemma}\label{long-component}
For any $x \in V(G)$ we have (1) $\ell_x \leq b$ and (2) if $x$ is the parent of $y$ then $\ell_x \leq \ell_y$.
\end{lemma}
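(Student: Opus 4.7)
For part (1) this is a short pigeonhole. The set of vertices of $G_x$ lying within distance $b$ of the left endpoint of the restricted arrangement has at most $b$ elements (the leftmost $b$ positions). Every long component, by definition, contains at least one of them, and distinct components are vertex-disjoint, so $\ell_x \le b$.

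For part (2), I will build an injection $C \mapsto D_C$ from the long components of $G_x$ into the long components of $G_y$; since $D_C$ will be contained in $C$ and the $C$'s are pairwise disjoint, injectivity is automatic and yields $\ell_x \le \ell_y$. I carry out the argument when $y$ is the left child of $x$ (the right-child case is symmetric, built from the rightmost vertex of each component). Given a long component $C$ of $G_x$, let $u_C$ be its leftmost vertex in the arrangement; because $C$ is long, $u_C$ lies within distance $b$ of the left endpoint of $G_x$. The left endpoints of $G_x$ and $G_y$ coincide and the first positions of $G_x$ all belong to $V(G_y)$, so $u_C \in V(G_y)$ and $u_C$ is within distance $b$ of the left endpoint of $G_y$ as well. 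Define $D_C$ to be the connected component of $G_y$ that contains $u_C$; then $D_C$ already has the ``near left endpoint'' witness required of a long component of $G_y$.

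The main obstacle is verifying that $D_C$ also contains a vertex within distance $b$ of the \emph{right} endpoint of $G_y$, and this is where the bandwidth bound gets used. Pick any $v_C \in C$ within distance $b$ of the right endpoint of $G_x$ (guaranteed because $C$ is long) and a path $P \subseteq C$ from $u_C$ to $v_C$. Walk along $P$ starting at $u_C$: as long as $P$ stays in $V(G_y)$ it stays inside $D_C$. If $P$ never leaves $V(G_y)$, then $v_C \in D_C$, and since $v_C$ lies in both the last $b$ positions of $G_x$ and in $V(G_y)$, it automatically lies within distance $b$ of the right endpoint of $G_y$. Otherwise, $P$ must traverse some first split edge $e \in S_x$ that joins $G_y$ to $G_z$; the bandwidth bound forces the endpoint $w \in V(G_y)$ of $e$ to be within distance $b$ of the right endpoint of $G_y$, and because the portion of $P$ up to $w$ stays in $G_y$, we have $w \in D_C$. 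In either case $D_C$ reaches within distance $b$ of both endpoints of $G_y$, so $D_C$ is a long component of $G_y$, and the injection $C \mapsto D_C$ completes the proof.
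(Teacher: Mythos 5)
Your proof of part~(1) is the paper's pigeonhole argument verbatim, and the left-child half of part~(2) is a more explicit version of the paper's reasoning: both hinge on the observation that any split edge in $S_x$ has its $G_y$-side endpoint within $b$ positions of the right end of $G_y$, so a long component of $G_x$ projects to a long component of $G_y$. Your walk-along-$P$ casework (does $P$ leave $G_y$ or not) is a clean way to make that projection precise.

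The gap is the sentence ``the right-child case is symmetric.'' The arrangement tree is deliberately \emph{not} left-right symmetric: the left child of $x$ receives exactly $p$ leaves where $p$ is the largest power of two below $|V(G_x)|$, so $|V(G_y)| \ge |V(G_x)|/2$ always, whereas $|V(G_z)| = |V(G_x)| - p$ can be as small as $1$. Whenever $|V(G_z)| < b$, the last $b$ positions of $G_x$ spill into $G_y$, so a long component $C$ of $G_x$ can be entirely contained in $G_y$ with $C \cap V(G_z) = \emptyset$. In that case the ``rightmost vertex of $C$'' that your symmetric injection would use is not a vertex of $G_z$, and there is no component $D_C$ of $G_z$ to map to. (A milder version of the same assumption hides in the left-child argument too: your step ``$u_C \in V(G_y)$ because the first positions of $G_x$ belong to $V(G_y)$'' tacitly needs $|V(G_y)| \ge b$, which holds once $|V(G_x)| \ge 2b$ but not for tiny nodes.) For what it's worth, the paper's own one-line proof of part~(2) has exactly the same hole --- it asserts that every long component of $G_x$ ``must contain a long component in $G_z$'' without addressing small $G_z$ --- so you are faithfully reproducing the intended argument; but you should flag the asymmetry rather than invoke symmetry, and either restrict the claim to nodes with $|V(G_x)| \ge 2b$ (which suffices for the charging scheme, since short-spread cycles are bounded separately in the theorem) or give a separate argument for small subgraphs.
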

\begin{proof}
Since a long component is a special type of connected component a vertex can be in at most one long component.
A long component contains at least one vertex from the first $b$ vertices in the linear arrangement. This implies there can be at most $b$ long components, hence $\ell_x \leq b$.

Let $x$ be a node in $A$ with left child $y$ and right child $z$.
Recall that $G_x = G_y \cup G_z \cup S_x$.
The left endpoint of $G_x$ is the left endpoint of $G_y$, and the right endpoint of $G_x$ is the right endpoint of $G_z$.
Any edge in $S_x$ connects a vertex within the rightmost $b$ vertices of $G_y$ to a vertex within the leftmost $b$ vertices of $G_z$.
Therefore a long component in $G_x$ must contain a long component in $G_y$, a long component in $G_z$, and an edge in $S_x$, thus $\ell_x \leq \ell_y$.
\end{proof}

Here we introduce a charging scheme that will be used to pay for the cycles added to our basis.
For any node $x$ in the arrangement tree $A$ let $n_x$ be the number of leaf descendants of $x$.
If $x$ is the parent of $y$ and $z$ such that $\ell_x < \ell_y$ and $\ell_x < \ell_z$ we assign a charge $c_x = n_y + n_z$ to $x$, if $\ell_x < \ell_y$ and $\ell_x = \ell_z$ we assign a charge $c_x = n_y$ to $x$, similarly if $\ell_z < \ell_x$ and $\ell_y = \ell_z$ we assign $c_x = n_z$, otherwise $c_x = 0$. 
Next, we show that the sum over all charges is $O(n)$.

\begin{lemma}\label{linear-charge}
The sum of the charges is linear in the number of vertices in $G$. That is, $\sum_{x \in V(A)} c_x \leq bn$.
\end{lemma}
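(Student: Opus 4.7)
The plan is to swap the order of summation so that the total charge is accounted for leaf-by-leaf (i.e.\ vertex-by-vertex of $G$), rather than node-by-node of $A$. First I would observe that in every case of the paper's definition, the value of the charge can be written uniformly as
\[ c_x \;=\; \sum_{\substack{c \text{ child of } x \\ \ell_c > \ell_x}} n_c, \]
using \autoref{long-component}(2) to rule out any child $c$ with $\ell_c < \ell_x$. Expanding $n_c$ as a sum over its leaf descendants and swapping the order of summation turns the total charge into
\[ \sum_{x \in V(A)} c_x \;=\; \sum_{v \in V(G)} \bigl|\{\, i \geq 1 : \ell_{u_{i-1}} > \ell_{u_i} \,\}\bigr|, \]
where $v = u_0, u_1, \ldots, u_h$ denotes the ancestors of $v$ from leaf to root in $A$.

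The inner quantity is simply the number of strict descents in the finite integer sequence $\ell_{u_0}, \ell_{u_1}, \ldots, \ell_{u_h}$. By \autoref{long-component}(2) this sequence is non-increasing, and by \autoref{long-component}(1) each term lies in the range $\{0, 1, \ldots, b\}$. A non-increasing integer sequence bounded above by $b$ (and bounded below by $0$) can make at most $b$ strict decreases, so each leaf $v$ contributes at most $b$. Summing over the $n$ leaves of $A$ yields $\sum_{x \in V(A)} c_x \leq bn$, which is exactly the claim.

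The main step I expect to need care with is the unification of the four-case definition of $c_x$ into the single formula above; after that the argument is a short monotone-sequence/telescoping observation, and the constant $b$ comes directly from \autoref{long-component}. Notably, the bound is not even tight at the leaves, since a leaf $u_0$ has $G_{u_0}$ a single vertex and therefore $\ell_{u_0} \leq 1$, but this slack is irrelevant for the linear bound being claimed.
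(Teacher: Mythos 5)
Your proof is correct and takes a genuinely different route from the paper's. The paper's argument fixes a value $j$, collects the set $J$ of nodes with exactly $j$ long components and nonzero charge, and uses the monotonicity of $\ell$ along root paths to show that the sets of leaves supplying the charges of distinct members of $J$ are pairwise disjoint; this gives at most $n$ per value of $j$, hence at most $bn$ over all $j \le b$. You instead swap the order of summation to a per-leaf accounting: each leaf $v$ contributes once for every strict descent of $\ell$ along its leaf-to-root path, and a non-increasing integer sequence taking values in $\{0,\dots,b\}$ has at most $b$ strict descents. The two are dual views of the same double count --- your ``$v$ witnesses a descent past level $j$'' is precisely the paper's ``$v$ is a leaf supplying charge to some node with $\ell = j$'' --- but your version makes the telescoping explicit, avoids the intermediate disjointness claim (which is the fiddliest step in the paper's proof), and needs no case discussion among the members of $J$. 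Your collapse of the paper's four-case definition into the single formula $c_x = \sum_{c : \ell_c > \ell_x} n_c$ is accurate, and it also quietly repairs an inequality typo in the third case of the paper's definition of $c_x$. Both proofs invoke \autoref{long-component} (parts 1 and 2) in identical roles.
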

\begin{proof}
Consider the set $J$ of nodes with exactly $j$ long components and non-zero charge.
If $u, v \in J$ such that $v$ is a descendent of $u$, then all nodes on the $u$ to $v$ path are in $J$ since by \autoref{long-component} the number of long components is monotonic in depth.
Let $x$ be a node on this path, let $z$ be its child on the path, and let $y$ be its child off the path.  
If both $x$ and $y$ have $j$ long components our charging scheme makes $c_y = 0$, therefore $y \notin J$ and $c_x$ is the number of leaf descendants of $y$. 
Therefore, the sets of leaf descendants from which every node in $J$ derives its charge are disjoint.
Thus, $\sum_{x \in J} c_x \leq n$.
By \autoref{long-component} the number of long components in any induced subgraph is at most $b$, therefore $\sum_{x \in V(A)} c_x \leq bn$.
\end{proof}

Recall that the spread of a fundamental cycle $C$ is defined to be $\phi(v_\ell) - \phi(v_r)$ where $v_\ell$ and $v_r$ are the left and right endpoints of $C$.
In \autoref{spread-factor} we show that the spread of $C$ is within a constant factor of its length.
In \autoref{charge-cycle} we show that $C$'s fundamental edge induces a charge that is within a constant factor of the spread of $C$.
This justifies the use of our charging scheme.

\begin{lemma}\label{spread-factor}
If $C$ is a cycle with length $|C|$ and spread $s$, then we have the inequality $\frac{2s}{b} \leq |C| \leq s + 1$.
\end{lemma}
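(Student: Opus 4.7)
The plan is to prove the two inequalities separately, using the spread to bound the number of vertex positions available to $C$ and using the bandwidth to bound how far along the linear arrangement a single edge can reach.

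For the upper bound $|C| \leq s + 1$, I would observe that every vertex of $C$ has $\phi$-position lying in the closed interval $[\phi(v_\ell), \phi(v_r)]$, which contains exactly $s + 1$ integers. Since $\phi$ is a bijection and $C$ is a simple cycle, the number of vertices of $C$ is at most $s + 1$, and for a simple cycle the number of edges equals the number of vertices. Hence $|C| \leq s + 1$.

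For the lower bound $|C| \geq 2s/b$, I would use that a simple cycle $C$ containing both $v_\ell$ and $v_r$ decomposes as two internally disjoint paths $P_1$ and $P_2$, each joining $v_\ell$ to $v_r$. By the definition of bandwidth $b$, for every edge $(u, w) \in E(G)$ we have $|\phi(u) - \phi(w)| \leq b$. Walking along $P_j$ and applying the triangle inequality to the telescoping sum of position differences gives $s = |\phi(v_r) - \phi(v_\ell)| \leq b \cdot |P_j|$, so $|P_j| \geq s/b$. Summing over $j = 1, 2$ yields $|C| = |P_1| + |P_2| \geq 2s/b$.

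There is no real obstacle here; the argument is essentially a bookkeeping exercise combining the two defining inequalities (one per vertex of $C$ for the upper bound, one per edge of $C$ for the lower bound). The only minor point to state carefully is that $C$ is a simple cycle, so that ``length'' can be read either as number of edges or number of vertices, and that its two $v_\ell$-$v_r$ arcs are well-defined.
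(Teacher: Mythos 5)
Your proof is correct and follows essentially the same route as the paper: the lower bound is proved identically by decomposing $C$ into its two $v_\ell$-to-$v_r$ paths, bounding each edge's positional displacement by $b$, and telescoping. The only difference is that you spell out the upper bound (counting positions in $[\phi(v_\ell),\phi(v_r)]$ and using injectivity of $\phi$ plus $|V(C)|=|E(C)|$ for a simple cycle), which the paper simply labels ``trivial.''
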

\begin{proof}
The upper bound is trivial.
Conversely, decompose $C$ into the two unique $v_\ell$-to-$v_r$ paths.
Each edge in these paths has a spread of at most $b$ in the linear arrangement, so each path needs at least $\frac{s}{b}$ edges.
Therefore, $\frac{2s}{b} \leq |C|$.
\end{proof}

Let $C$ be a fundamental cycle of $T$ with length $|C|$, spread $s \geq 4b$, and whose fundamental edge is in $S_x$.
Since $C$'s fundamental edge is in $S_x$, $G_x$ must be the first induced subgraph in the leaf-to-root ordering that contains $C$ since every tree edge of $C$ must be added to $T$ before the fundamental edge is considered by Algorithm~\ref{alg:st}.
Let $S = \{v \in V(G) \mid \phi(v_\ell) \leq \phi(v) \leq \phi(v_r)\}$ where $v_\ell$ and $v_r$ are the left and right endpoints of $C$.
Let $u$ and $v$ be the left and right child of $x$ in $A$, respectively. We call $S \cap G_u$ the left half of $S$ and $S \cap G_v$ the right half of $S$.
Without loss of generality assume that $|S \cap G_v| \geq |S \cap G_u|$.
Let $y$ be the deepest descendant of $x$ such that $G_y$ contains the right half of $S$.
Note that it may be the case that $y = v$.
We call $y$ the charging node of $C$.
This is illustrated in \autoref{arrangement-tree}.
In the following lemma we show that the existence of $C$ implies that $c_y = \Theta(|C|)$. 
This is the charge that will pay for $C$ in the cycle basis.

\begin{lemma}\label{charge-cycle}
Let $C$ be a fundamental cycle of $T$ as described above.
It follows that $C$'s charging node $y$ has $c_y > 0$ and $y$'s left child $z$ contributes $n_z$ to its charge.
Moreover, $\frac{1}{4}(|C| - 1) \leq c_y \leq b \cdot |C|$.
\end{lemma}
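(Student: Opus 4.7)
The plan is to establish three facts in order: that $\ell_y < \ell_z$ (which both forces $c_y > 0$ and ensures $n_z$ contributes to $c_y$); the upper bound $c_y \le b \cdot |C|$; and the lower bound $c_y \ge (|C|-1)/4$. As preparation, I would observe that $y$ must lie on the leftmost path of the subtree of $A$ rooted at $v$ (the right child of $x$): the leftmost vertex of $S \cap G_v$ sits at position $m+1$ (the leftmost position of $G_v$, where $m$ is the split point at $x$), while $G_y$ must contain this vertex and be contained in $G_v$. Consequently $G_y$'s positions form an interval $[m+1,\,m+n_y]$, and the deepest-descendant choice of $y$ forces $|S \cap G_v| \le n_y < 2\,|S \cap G_v|$.

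Given the first fact, the two size inequalities on $c_y$ follow from standard estimates. For the lower bound, $c_y \ge n_z \ge n_y/2 \ge |S \cap G_v|/2 \ge |S|/4 = (s+1)/4 \ge |C|/4 \ge (|C|-1)/4$, using $|C| \le s+1$ from \autoref{spread-factor} and the fact that in the balanced arrangement tree the left child carries at least half the leaves. For the upper bound, $c_y \le n_y < 2\,|S \cap G_v| \le 2(s+1)$, which combined with $s \le b|C|/2$ (again from \autoref{spread-factor}) and the hypothesis $s \ge 4b$ yields $c_y \le b \cdot |C|$ after a short calculation.

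The crux is establishing $\ell_y < \ell_z$. The plan is to use $C$'s structure to exhibit a long component of $G_z$ that fails to become a long component of $G_y$. Write the fundamental edge as $e = (p,q) \in S_x$ with $p \in G_u$ and $q \in G_v$; by bandwidth, $q$ sits at position at most $m+b$, and since $s \ge 4b$ forces $n_z \ge b$, the vertex $q$ lies in $G_z$ within distance $b$ of the left end of $G_z$. The tree path $P$ of $C$ from $p$ to $q$ must visit $v_r \in G_w$ at some point. Tracing $P$ from its last re-entry into $G_z$ (necessarily via an $S_y$ edge landing within $b$ of the right end of $G_z$) down to $q$ yields a subpath lying entirely inside $G_z$ that connects a vertex near the right end of $G_z$ to $q$ near the left end, hence a long component $L$ of $G_z$. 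The hard part is arguing that $L$ does not also survive as a long component of $G_y$; I expect to handle this by producing a second long component of $G_z$ from the first transit of $P$ across $G_z$ (on its way from $p$ to $v_r$), so that these two long components of $G_z$ either merge in $G_y$ (reducing the count) or one of them fails to reach the rightmost $b$ positions of $G_y$. This delicate case analysis, balancing the $S_y$ edges chosen by Algorithm~\ref{alg:st} against the connectivity forced by $P$, is where I expect the bulk of the work to lie.
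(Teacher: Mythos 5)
Your bound calculations (the chain $n_z \le s$, $s + 1 \le 4n_z$, and the translation via \autoref{spread-factor}) match the paper's and are essentially correct, and you rightly identify $\ell_y < \ell_z$ as the crux of the lemma. But the proof of that inequality is where your proposal stops short: you explicitly defer ``the bulk of the work'' to a ``delicate case analysis'' that you do not carry out, and the step you do take contains an unjustified claim. You assert that the last re-entry of the tree path $P$ into $G_z$ is ``necessarily via an $S_y$ edge landing within $b$ of the right end of $G_z$,'' but $P$ could equally well re-enter $G_z$ from $G_u$ via an $S_x$ edge landing near the \emph{left} end of $G_z$; in that case the terminal subpath down to $q$ stays near the left end and does not by itself exhibit a long component.

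The paper's argument is more direct and sidesteps the case split you anticipate. It takes the two $v_\ell$-to-$v_r$ arcs $P_1, P_2$ of $C$; each must cross from $G_u$ into $G_z$ via some $S_x$ edge $e_1, e_2$ (one of which is the fundamental edge), and the $G_z$-endpoints $r_1, r_2$ of these edges lie in long components $L_1, L_2$ of $G_z$. Since $P_1$ and $P_2$ are internally disjoint and meet only at $v_\ell, v_r \notin G_z$, the two long components are distinct. Both $P_1$ and $P_2$ reach $v_r \in G_y$, so $L_1$ and $L_2$ fall into the same connected component of $G_y$. Once you know two \emph{distinct} long components of $G_z$ are merged in $G_y$, $\ell_y < \ell_z$ follows at once from the injection underlying \autoref{long-component}, regardless of whether the merged component of $G_y$ is itself long --- so the dichotomy ``merge versus fail to extend'' you planned to analyze is not actually needed, and the merging alternative always occurs. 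Your instinct to produce a second long component from the $p$-to-$v_r$ portion of $P$ is the right one, but you would still need the distinctness and the merging, which is precisely the content your sketch leaves open.
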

\begin{proof}
Consider the two unique $v_\ell$-to-$v_r$ paths, $P_1$ and $P_2$, in $C$.
Since there are at least $b$ vertices in $G_z$ there must be edges $e_1 \in E(P_1)$ and $e_2 \in E(P_2)$ connecting $G_u$ to $G_z$.
One of these edges belongs to $T$, and the other is the fundamental edge of $C$.
The right endpoints of $e_1$ and $e_2$ must belong to long components of $G_z$ since they belong to $P_1$ and $P_2$ which extend to $v_r$.
Moreover, these long components are distinct.
For otherwise, $C$'s right endpoint would be in $G_z$, contradicting our choice of $y$.
By the existence of $P_1$ and $P_2$, these long components are merged in $G_y$.
Since $y$ is the parent of $z$ with $\ell_y < \ell_z$, we have $c_y \geq n_z$.
We also have that $c_y \leq n_y = 2n_z$.
Further, by our choice of $y$ as the deepest descendant, $n_z \leq s \leq 4n_z$.
Combining these inequalities with those of \autoref{spread-factor} yields $\frac{1}{4}(|C| - 1) \leq c_y \leq b \cdot |C|$.
\begin{figure}[!htb]
  \centering
    \includegraphics[scale=.3]{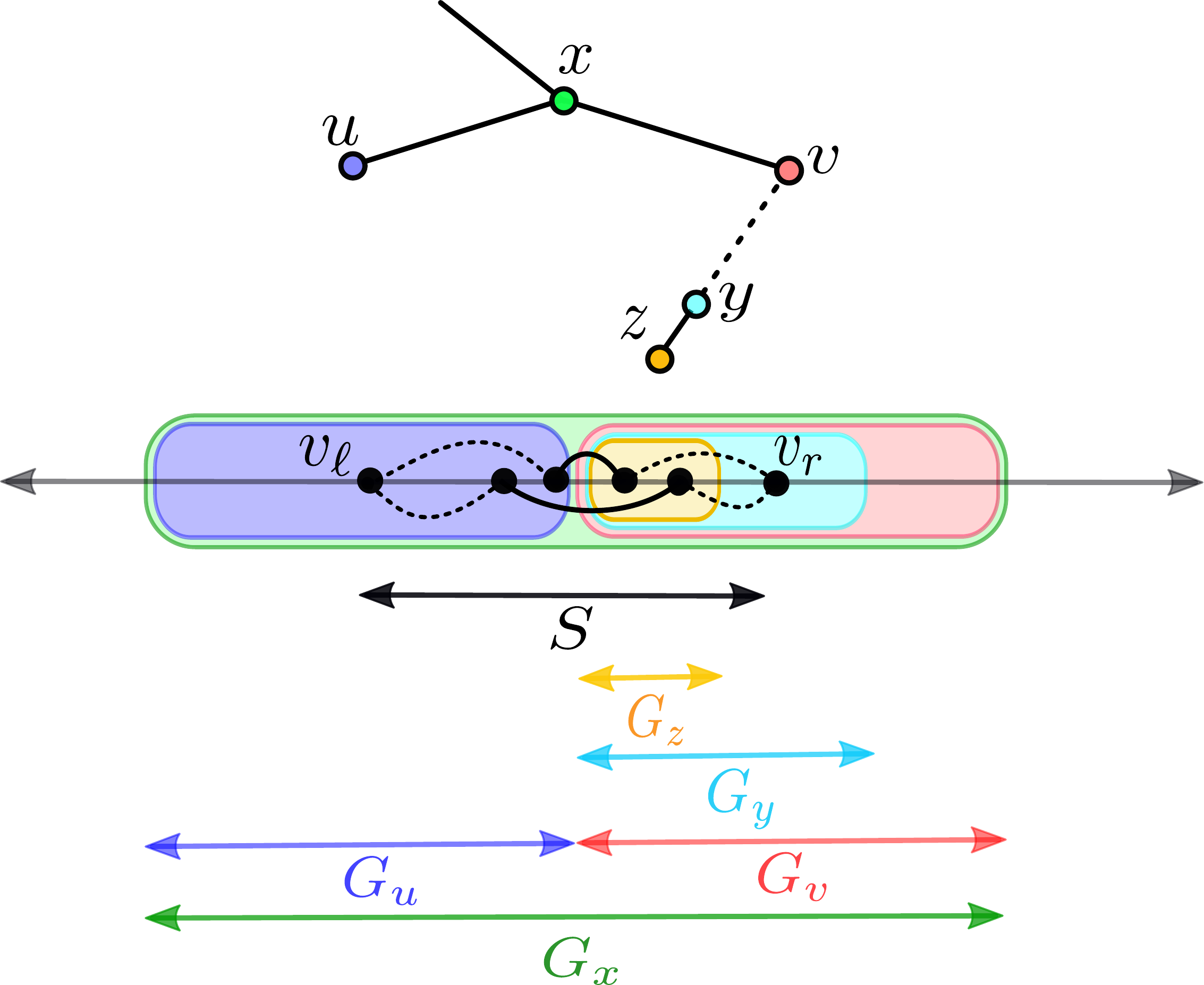}
  \caption{An illustration of the conditions of \autoref{charge-cycle}. The colored region encloses the linear arrangement of $G_x$, and the partitions represent the subgraphs induced by the descendants of $x$. The dotted lines represent the paths $P_1$ and $P_2$.
The solid lines represent the edges that induce the charge $c_y$.}
    \label{arrangement-tree}
\end{figure}
\end{proof}

We are now ready to prove the main theorem of the section.

\begin{theorem}\label{bw-stretch}
The spanning tree $T$ of $G$ produced by Algorithm~\ref{alg:st} has $\FCB(T) \leq 4b^3n$.
\end{theorem}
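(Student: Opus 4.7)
The plan is to bound $\FCB(T) = \sum_C |C|$ by partitioning the fundamental cycles of $T$ according to their spread into \emph{small-spread} cycles ($s_C < 4b$) and \emph{large-spread} cycles ($s_C \geq 4b$), and bounding each class separately. The threshold $4b$ is chosen because it is precisely the hypothesis of \autoref{charge-cycle}, which is the workhorse for the large-spread case, while the opposite regime is easily handled by direct length bounds.

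For small-spread cycles, \autoref{spread-factor} gives $|C| \leq s_C + 1 \leq 4b + 1$. Because bandwidth $b$ implies $\deg(v) \leq 2b$ and hence $|E(G)| \leq bn$, there are at most $m - n + 1 \leq bn$ fundamental cycles in total. So the combined length of small-spread cycles is at most $(4b+1)bn = O(b^2 n)$, which is a lower-order term that will be absorbed into the final bound.

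For large-spread cycles, I would apply \autoref{charge-cycle} to each such $C$ to obtain its charging node $y(C)$ in the arrangement tree, with $|C| - 1 \leq 4\,c_{y(C)}$. Regrouping the resulting sum by charging node,
\[
\sum_{C : s_C \geq 4b} (|C| - 1) \;\leq\; 4 \sum_C c_{y(C)} \;=\; 4 \sum_{y \in V(A)} c_y \cdot M_y,
\]
where $M_y$ counts the large-spread cycles whose charging node is $y$. Each such cycle is determined by a distinct fundamental edge, and that edge lies in $S_x$ for some ancestor $x$ of $y(C)$; since $|S_x| \leq \tfrac{1}{2}(b-1)(b-2) = O(b^2)$ by Equation~(\ref{edge-bound}), I expect to obtain $M_y = O(b^2)$. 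Combined with $\sum_y c_y \leq bn$ from \autoref{linear-charge}, this gives $\sum_{C : s_C \geq 4b}(|C|-1) \leq 4 b^3 n$. Adding the small-spread contribution and absorbing it into the leading constant yields $\FCB(T) \leq 4 b^3 n$.

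The principal technical obstacle is justifying the multiplicity bound $M_y = O(b^2)$. A priori, a single node $y$ could serve as the charging node for cycles whose fundamental edges are split at several different ancestors $x$ of $y$, namely the $x$'s lying on the chain from $y$'s parent upward through left-children to some right-child ancestor. The cleanest route is to argue that the definition of the charging node (the deepest descendant of $x$ whose induced subgraph contains the right half of $S$) forces a canonical $x$ per $y$, so that all fundamental edges of cycles charging to $y$ live in a single $S_x$; alternatively, one can appeal to the long-component merge structure that makes $c_y > 0$ in the first place and bound the total merge events at $y$ by $O(b^2)$.
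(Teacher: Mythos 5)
Your proposal follows the same route as the paper's proof: split the fundamental cycles at spread threshold $4b$, bound the small-spread cycles directly, and bound the large-spread cycles by regrouping the $\tfrac14(|C|-1)\le c_{y(C)}$ charges from \autoref{charge-cycle} at each charging node $y$ and then invoking \autoref{linear-charge}. The multiplicity bound $M_y=O(b^2)$ that you flag as the principal obstacle is exactly what the paper's displayed sum $\sum_{y}\tfrac12(b-1)(b-2)(4c_y+1)$ silently assumes, and your first suggested justification --- that the definition of the charging node, via the leftmost leaf of the right half, pins down a canonical ancestor $x$ with $e\in S_x$, so $M_y\le|S_x|$ --- is indeed the right way to close that step.
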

\begin{proof}
There are at most $\frac{1}{2}(b-1)(b-2)$ edges in $S_x$ by (\ref{edge-bound}). 
By \autoref{charge-cycle}, sum of the lengths of all of the fundamental cycles with spread at least $4b$ is at most
\begin{align*}
\sum_{y \in V(A)} \frac{1}{2}(b-1)(b-2)(4c_y + 1) &\leq n + 2(b-1)(b-2) \sum_{y \in V(A)} c_y\\
&\leq n + 2(b-1)(b-2)bn\\
&\leq 3b^3n
\end{align*}
Where the first and second inequalities come from Lemmas \ref{charge-cycle} and \ref{linear-charge}, respectively.
All fundamental cycles with spread at most $4b$ have their non-tree edges a node of $A$ of height at most $\log 4b$. 
Therefore, there are at most $n$ nodes in $A$ with $|V(G_x)| \leq 4b$ that contain a cycle.  
These contribute at most $\frac{1}{2}(b-1)(b-2)n$ to the sum of the lengths of the fundamental cycles.
In total we have
\[\FCB(T) \leq 3b^3n + b^2n \leq 4b^3n\]
as desired.
\end{proof}

\begin{corollary}\label{bw-cor}
The tree $T$ produced by our spanning tree algorithm has $\str(T) \leq 4b^3 + 2$.
\end{corollary}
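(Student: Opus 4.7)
The plan is to read off this corollary directly from \autoref{bw-stretch}, using the general identity relating fundamental cycle basis weight to stretch established in \autoref{eqn:FCBvsStretch}. Recall that identity states $\FCB(T) = m\cdot \str(T) + m - 2n + 2$, so
\[\str(T) = \frac{\FCB(T) - m + 2n - 2}{m} = \frac{\FCB(T)}{m} - 1 + \frac{2n-2}{m}.\]
Plugging in the bound $\FCB(T) \leq 4b^3 n$ from \autoref{bw-stretch} gives $\str(T) \leq \frac{4b^3 n + 2n - 2}{m} - 1$, and it remains only to show that the right-hand side is at most $4b^3 + 2$.

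To finish, I would split on whether $G$ has any cycles. If $G$ itself is a tree, then Algorithm~\ref{alg:st} returns $T = G$, every edge has stretch $1$, and the corollary is immediate. Otherwise $G$ contains a cycle, so $m \geq n$. In that case, $\tfrac{4b^3 n + 2n - 2}{m} \leq \tfrac{4b^3 n + 2n}{n} = 4b^3 + 2$, and subtracting the $1$ yields $\str(T) \leq 4b^3 + 1 \leq 4b^3 + 2$, as claimed.

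There is essentially no obstacle here: the corollary is a one-line algebraic consequence of the theorem via the standard FCB-to-stretch conversion, with the only mild subtlety being that one needs $m \geq n$ in order to absorb the additive $2n-2$ into a constant, which is handled by the trivial tree case. I expect the author's own proof to be a single line doing exactly this division.
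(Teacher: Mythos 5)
Your proof is correct and takes essentially the same approach as the paper: invert the identity $\FCB(T) = m\cdot\str(T) + m - 2n + 2$ and substitute the bound $\FCB(T) \leq 4b^3 n$ from \autoref{bw-stretch}, then use $m \geq n$ to finish. The paper simply states ``$n \leq m \leq bn$'' and leaves it at that, whereas you explicitly handle the case $m = n-1$ (i.e.\ $G$ is a tree, where the algorithm trivially returns $T=G$ with $\str(T)=1$) before invoking $m \geq n$ in the cyclic case; this is a harmless and slightly more careful version of the same one-line algebraic argument, and incidentally also corrects a sign typo in the paper's rearrangement (the $-2$ in $\str(T) = \frac{1}{m}(\FCB(T) - m + 2n - 2)$).
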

\begin{proof}
According to (\ref{eqn:FCBvsStretch}), the weight of the fundamental cycle basis and the minimum stretch spanning tree are related by \[\str(T) = \frac{1}{m}(\FCB(T) - m + 2n + 2).\]  
The result follows immediately from the fact that $n \leq m \leq bn$.
\end{proof}
\section{Bounded treewidth}\label{section:treewidth}
In this section we consider simple, connected, unweighted graphs with fixed treewidth $k$.
We provide a dynamic programming approach computing a spanning tree that minimizes the total stretch over all spanning trees of $G$.
The dynamic programming table indexes partial solutions based on a localized view of the complete solution from a bag of the tree decomposition.
This is done by indexing the table with trees that correspond with weighted contracted spanning trees of $G$ that retain the stretch of the edges inside the current bag.
The approach yields a dynamic programming table whose size is polynomial in $n$ but superexponential in $k$. The goal of this section is to prove the following theorem.

\begin{theorem}
A minimum stretch spanning tree of a graph with $n$ vertices and treewidth $k$ can be computed in $O(2^{3k} k^{2k} n^{k+1})$ time.
\end{theorem}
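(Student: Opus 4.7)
The plan is to run a standard bottom-up dynamic program on a nice tree decomposition $\mathcal{D}$ of $G$. At every bag $B$, I would maintain a table whose entries are indexed by an \emph{interface signature}: a compact description of how the chosen partial spanning tree on $D(B)$ behaves with respect to $B$. For each such signature I store the minimum total stretch contribution already paid by edges of $G$ whose both endpoints have been forgotten. Because the rest of the decomposition only interacts with the forgotten portion through $B$, any two partial solutions with the same signature are interchangeable, which justifies the DP.

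The interface signature must record three things: (i) the partition of $B$ into the connected components of the current partial forest $F$ on $D(B)$; (ii) for each block of the partition, the topology of the \emph{Steiner tree} of $F$ spanning those $B$-vertices, where we keep only the $B$-vertices and Steiner nodes of degree $\geq 3$, so the topology has at most $2k{+}1$ nodes and at most $2k$ edges; and (iii) the integer length in $F$ of each edge of that Steiner tree (a value in $\{1, \dots, n{-}1\}$). Only (iii) contains the distance information that later introduced edges need in order to compute their stretch, and it is the only factor that scales with $n$. A rough count gives $2^{O(k)}$ partitions, $k^{O(k)}$ Steiner topologies on $k{+}1$ labeled leaves and at most $k{-}1$ internal nodes, and $n^{O(k)}$ choices of edge weights.

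Transitions are handled case by case on the nice decomposition. A leaf bag has the singleton signature. At an \emph{introduce} node adding $v$, I branch over the subset of already-seen edges incident to $v$ that will belong to the spanning tree, update the partition and the Steiner structure accordingly, and add $v$'s singleton component if no such edge is chosen. At a \emph{forget} node removing $u$, I first charge the stretches of every non-tree edge $(u,w)$ with $w \in D(B)$ (using the path length $u \leftrightarrow w$ encoded in the signature) and verify that $u$ is connected in the partial forest to every such $w$; I then contract $u$ out of the Steiner structure, concatenating weights on any two Steiner edges that meet at $u$. At a \emph{join} node with signatures $\sigma_L, \sigma_R$ on the common bag $B$, I overlay the two Steiner structures on $B$ (they share no edges because the descendant vertex sets overlap only in $B$) and accept the result iff the union is still acyclic on $B$; weights add edgewise along paths that share a corresponding Steiner edge.

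The main obstacle, and the place the claimed bound is decided, is the join step. Naively enumerating pairs $(\sigma_L, \sigma_R)$ costs the square of the table size, which is $n^{2k}$ in the weight factor — too expensive. The key observation I would exploit is that, for a fixed combined topology $\sigma$ on $B$, the weights of $\sigma_L$ and $\sigma_R$ on each edge of the contracted Steiner tree are independent and must sum to the weight of $\sigma$; hence the join can be factored over the at most $2k$ edges and computed by a convolution over integer weights in $[n]$, yielding $n^{O(k)}$ work per join rather than $n^{O(2k)}$. Together with $O(2^k k^{O(k)})$ topologies/partitions and $O(n)$ bags, I expect the arithmetic to pack exactly into the stated $O(2^{3k} k^{2k} n^{k+1})$ bound once the topology count and the per-join weight accounting are tightened.
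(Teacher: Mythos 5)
Your overall architecture (nice tree decomposition, bag-indexed signatures, one $n^{O(k)}$ dimension for encoded distances) is close in spirit to the paper, but there is a genuine gap in what your signature remembers, and the repair is precisely the piece the paper spends its effort on.

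Your signature only records the Steiner topology and lengths of the \emph{already built} partial forest $F$ on $D(B)$. The problem is that the $\mathcal T$-path between two bag vertices $u,w$ need not lie in $D(B)$: by Property~3 of tree decompositions, it may instead pass entirely through bag vertices and vertices in $A(B)$ that the DP has not seen yet. When you forget $u$, all of $u$'s incident edges of $G$ have been seen (this much is correct), so you want to charge $\str_{\mathcal T}(u,w)$ for each neighbor $w$ --- but if the tree path goes upward, $u$ and $w$ are in \emph{different} components of $F$ and your forget step's connectivity check would wrongly reject that solution. You cannot simply defer the charge, because once $u$ is forgotten the signature no longer names $u$. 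The paper's configuration $(T,c)$ repairs exactly this: $T$ is always a \emph{connected} contracted tree on the bag whose Steiner vertices are split into $S^A$ (anticipated, to be built above the bag) and $S^B/S^D$ (already built below the bag), and $c$ gives tentative lengths for the anticipated edges too. Introduce rules I3--I6 then ``spend down'' the anticipated lengths as those vertices actually appear, and the join step consists of deciding which below-bag Steiner components were built by each child (the ``inversion'' trick). Without the anticipated half $S^A$, the DP cannot price non-tree edges whose tree paths climb above the current bag, and correctness fails.

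Your analysis of the join step also contains a misconception. You are right that the two children's Steiner structures overlap only at $B$ and that naively pairing signatures is too expensive, but the resolution is not a convolution over edge weights. Each edge of the merged Steiner topology corresponds to a path in $F_i = F_L \cup F_R$ lying entirely in $D(B_j)$ or entirely in $D(B_k)$, so its length comes from exactly one child, not a sum. The correct move (and the one the paper makes) is to fix the parent configuration $(T_i,c_i)$ and then enumerate the $2^{O(k)}$ two-colorings of the maximal connected pieces of $S^D_i$; for each coloring the two child configurations are completely determined (same tree, same $c$, Steiner roles swapped). That is where the extra $2^k$ factor in the $O(2^{3k}k^{2k}n^{k+1})$ bound comes from, and it avoids any $n^{\Omega(k)}$ blow-up at the join without needing convolution.
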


\subsection{Spanning trees conforming to a configuration}
Let $\mathcal{T}$ be a spanning tree of $G$ and $(T, c)$ be a tuple consisting of a tree $T$ and a weight function $c$ on the edges of $T$. Fix a bag $B$ in the tree decomposition of $G$.
We say that $\mathcal{T}$ \emph{conforms} to $(T, c)$ if $\mathcal{T}$ can be transformed into $T$ by in the following way. Initialize $c(e) = 1$ for every edge in $\mathcal{T}$ and update by applying the following contractions while any of them is possible.
\begin{enumerate}
    \item If $e$ is not contained in any $(u, v)$-path where $u, v \in B$ then contract $e$.
    \item If $e=(u,v)$ where $u, v \notin B$ and $\deg_{\mathcal{T}}(v) = 2$ then contract $e$. Let $e'$ be the other edge incident to $v$. Set $c(e') \coloneqq c(e) + c(e')$.
    \item If $e=(u,v)$ where $u\in B$, $v \notin B$, and $\deg_{\mathcal{T}}(v) = 2$ then contract $e$. Let $e'$ be the other edge incident to $v$. Set $c(e') \coloneqq c(e) + c(e')$.
\end{enumerate}
$T$ is the unique minimal minor of $\mathcal{T}$ retaining the structure of the paths between vertices in $B$.
We call a tuple $(T, c)$ a \textit{configuration} of the bag $B$.  In Lemma \ref{treebound}, we will show that any spanning tree $\mathcal{T}$ conforms to a bounded number of configurations.
Our dynamic program will maintain an array of forests $\DP_i[T, c]$ indexed by a bag $B_i$ of the tree decomposition and all configurations with respect to the bag.
Each configuration at $B_i$ will describe a spanning tree $\mathcal{T}$ on $G$ that has been contracted in the way described above. 
We say a forest $F$ \textit{meets} a configuration $(T, c)$ if by following the contraction rules stated above $F$ can be transformed into $T \setminus S^A$ for some $S^A \subseteq V(T) \setminus V(B)$. We will define the subset $S^A$ in the following paragraph.
The solution stored at $\DP_i[T, c]$ will be the minimum cost forest of $G[D(B_i)]$ meeting the configuration $(T, c)$. We will describe how to calculate the cost of $F$ in the next subsection.
We will use $\DP_i[T, c]$ to refer to the total stretch of the partial solution and use $F$ to denote the partial solution that has been computed.

Let $\mathcal{T}$ be a tree built by our dynamic program conforming to $(T, c)$ and let $v_1,\dots,v_n$ be a path in $T$ such that $v_1,v_n \in V(B)$ and $v_2,\dots,v_{n-1} \in V(T) \setminus V(B)$. 
By property 3 of the tree decomposition either $v_2,\dots,v_n \in D(B)$ or $v_2,\dots,v_n \in A(B)$.
We call the vertices in $V(T) \setminus V(B)$ Steiner vertices and partition them into two sets $S^A$ and $S^B$, the Steiner vertices above the bag and the Steiner vertices below the bag.
A forest $F$ meets the configuration $(T, c)$ if it can be transformed into $T \setminus S^A$ following our contraction scheme.
The cost of $F$ is defined to be the sum $\sum_{e \in E(G)} \str_F(e)$ where $\str_F(e)$ is the stretch of $e$ in $F$ when $e$'s endpoints are in the same connected component of $F$, when $e$'s endpoints are in different connected components we set $\str_F(e)$ to be the distance between $e$'s endpoints in $T$ weighted by the cost function $c$.
Our dynamic program will process the bags of the tree decomposition in a leaf-to-root order.
Paths in $S^A$ will represent paths that will eventually be added to the complete solution by the dynamic program and paths in $S^B$ will represent paths that have already been added to the partial solution by the dynamic program.
\begin{lemma}
\label{treebound}
Let $\mathcal{T}$ be a spanning tree of $G$. There is a configuration $(T, c)$ at bag $B$ that $\mathcal{T}$ conforms to such that $|V(T)| = O(k)$.
\end{lemma}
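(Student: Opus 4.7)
The plan is to track exactly what the three contraction rules do and then count the remaining vertices of $T$ using a classical Steiner-topology argument. First I would analyze rule 1 in isolation: an edge $e$ lies on some path between two bag vertices if and only if it lies in the minimal subtree of $\mathcal{T}$ spanning $B$, so exhaustively contracting all edges that fail this condition reduces $\mathcal{T}$ to that minimal subtree $\mathcal{T}'$. In particular, every leaf of $\mathcal{T}'$ lies in $B$: a non-bag leaf would be incident to an edge on no bag-to-bag path, which rule 1 would have contracted. Consequently every Steiner (non-bag) vertex of $\mathcal{T}'$ has degree at least $2$.

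Next I would observe that rules 2 and 3 together contract exactly the degree-$2$ Steiner vertices of the current tree, absorbing their weights into an adjacent edge via $c$, while never touching a bag vertex. Exhaustively applying them yields the final $T$, in which every Steiner vertex has degree at least $3$ and bag vertices can have arbitrary degree. To count, let $\beta = |V(T) \cap B| \leq k+1$ and let $s$ be the number of Steiner vertices of $T$. Every leaf of $T$ lies in $B$, so the number of leaves $L$ satisfies $L \leq \beta$. The tree identity $\sum_v \deg(v) = 2(|V(T)| - 1)$ implies that the number of vertices of degree at least $3$ is at most $L - 2$; since every Steiner vertex has degree $\geq 3$, this gives $s \leq L - 2 \leq \beta - 2 \leq k - 1$. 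Hence $|V(T)| = \beta + s \leq (k+1) + (k-1) = 2k = O(k)$.

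There is no serious obstacle, but one subtlety is worth flagging: the expression $\deg_{\mathcal{T}}(v)$ appearing in rules 2 and 3 has to be read as the degree in the currently contracted tree rather than in the original spanning tree, for otherwise the rules could never fire iteratively. With that reading, the rules act precisely as described, no bag vertex is ever contracted away, and the final $T$ is the canonical minimal minor. The rest is just the classical fact that a Steiner topology with $t$ terminals has at most $t-2$ internal branch points, instantiated here with $t \leq k+1$.
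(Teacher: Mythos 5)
Your proof is correct and follows essentially the same structure as the paper's: establish that every vertex of $V(T)\setminus B$ is internal, control the degree-$2$ Steiner vertices, and then apply the standard Steiner-topology count to bound the degree-$\geq 3$ vertices by the number of leaves (all of which lie in $B$). One small discrepancy worth flagging: you conclude that \emph{no} degree-$2$ Steiner vertex survives, because rule~3 as written fires whenever a degree-$2$ non-bag vertex $v$ has at least one neighbor in $B$, regardless of where its other neighbor lies; the paper's proof instead asserts that degree-$2$ Steiner vertices adjacent to two bag vertices may remain and separately bounds their number by $k+1$. Under the literal reading of rule~3 your conclusion is the right one (the other neighbor's membership in $B$ is never tested), and it yields the cleaner bound $|V(T)|\leq 2k$ rather than the paper's implicit $\approx 3k$; both give $O(k)$, so the lemma holds either way. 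Your explicit observation that $\deg_{\mathcal{T}}(v)$ must be read as the degree in the current contracted tree, not the original spanning tree, is also a genuine clarification that the paper leaves implicit.
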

\begin{proof}
Let $(T, c)$ be the configuration obtained by applying the contraction rules to $\mathcal{T}$.
Every vertex $v \in V(T) \setminus B$ is an internal vertex of $T$, otherwise its incident edge is not contained in a path connecting a pair of vertices from $B$ and should have been contracted. 
Further, any vertex of $V(T) \setminus B$ with degree $2$ in $T$ is adjacent to two vertices of $B$. Therefore, $T$ is a tree with at most $k+1$ leaves and $k+1$ vertices of degree $2$. It follows that $|V(T)| = O(k)$.
\end{proof}

We now describe how to populate each entry in the dynamic programming table by considering each type of bag separately.
We will prove that the forests indexed at each entry $\DP_i[T, c]$ span $D(B_i)$ and minimize the cost over all forests meeting the configuration $(T, c)$.
We will prove each case by induction using the fact that any solution stored at a leaf node is a single vertex as our base case.

\subsection{Leaf nodes}
If $B_i$ is a leaf node in the tree decomposition it contains one vertex $v$. The only configuration on $B_i$ is $(\{v\}, \emptyset)$ where $\emptyset$ is the empty function.
We initialize $F_i \coloneqq \{v\}$ and $\DP_i[\{v\}, \emptyset] \coloneqq 0$.

\subsection{Introduce nodes}
When $B_i$ is an introduce node with child $B_j$ we have $B_i = B_j \cup \{v\}$ where $v$ is the vertex being introduced to $B_i$.
Let $(T_i, c_i)$ and $(T_j, c_j)$ be configurations of $B_i$ and $B_j$.
Let $F_j$ be the partial solution stored at $\DP_j[T_j, c_j]$.
We say $(T_i, c_i)$ and $(T_j, c_j)$ are compatible with one another if $(T_i, c_i)$ can be constructed from $(T_j, c_j)$ in a way that extends $F_j$ to a partial solution $F_i$ in the following way.
If $\mathcal{T}$ is a spanning tree conforming to $(T_j, c_j)$ such that $\mathcal{T}[D(B_j)] = F_j$ we construct $F_i$ and $(T_i, c_i)$ such that $\mathcal{T}[D(B_i)] = F_i$ and $\mathcal{T}$ conforms to $(T_i, c_i)$.
We enumerate the six ways $F_j$ can be extended to $F_i$ meeting this criteria; by $N(v)$ and $I(v)$ we denote the neighbors of a vertex $v$ and the edges incident to $v$.
\begin{enumerate}

\item[I1] Let $e = (v, u) \in E(G)$ with $u \in E(G[B_i])$. Define $T_i \coloneqq T_j \cup \{e\}$ and $c_i(e) = \ell(e)$.
This extends $F_j$ to $F_i \coloneqq F_j \cup \{e\}$. 

\item[I2] Let $v$ be adjacent to some set of vertices $B_v \subseteq B$ in $G$ and let $s \in S^A_j$ such that $B_v = N(s) \cap B_j$ and $c_j(b, s) = 1$ for each $b \in B_v$.
Define $S_i^A \coloneqq S_j^A \setminus \{v\}$, $S_i^B \coloneqq S_j^B$, and $E(T_i) \coloneqq E(T_j) \cup I(v) \cap E(B_i)$ with $c_i(e) = 1$ for all $e \in I(v) \cap E(B_i)$ and $c_i(e) = c_j(e)$ for all $e \notin I(v) \cap E(B_i)$.
This extends $F_j$ to $F_i \coloneqq F_j \cup (I(v) \cap I(B_v))$.

\item[I3] Let $v$ be adjacent to some vertex $b \in B_j$ in $G$. Let $b$ be adjacent to some Steiner vertex $s \in S_j^A$ with $c_j(b, s) > 1$.
Define $T_i \coloneqq T_j \cup \{(v, b), (v, s)\}$ with $c_i(v, s) \coloneqq c_j(v, s) - 1$.
This extends $F_j$ to $F_i \coloneqq F_j \cup \{(v, b)\}$.

\item[I4] Let $s \in S_j^A$ and define $T_i \coloneqq T_j \cup \{(v, s)\}$ with $1 \leq c_i(v, s) \leq n$. This extends $F_j$ to $F_i \coloneqq F_j \cup \{v\}$.

\item[I5] Define $S_i^A \coloneqq S_j^A \cup \{s\}$ and let $b \in B_j$. Define $T_i \coloneqq T_j \cup \{(v, s), (b, s)\}$ with $1 \leq c_i(v, s) \leq n$ and $1 \leq c_i(b, s) \leq n$.
This extends $F_j$ to $F_i \coloneqq F_j \cup \{v\}$.

\item[I6] Let $s \in S_j^A$ be a Steiner vertex with $\deg(s) > 2$ and let $b \in B_j$ be adjacent to $s$ in $T_j$. We remove $(b, s)$ and introduce a new Steiner vertex $s'$ with edges $(b, s')$, $(s, s')$, and $(v, s')$. Hence $S_i^A \coloneqq S_j^A \cup \{s'\}$ and $T_i \coloneqq (T_j \setminus \{b, s\}) \cup \{(b, s'), (s, s'), (v, s')\}$ such that $c_i(b, s') + c_i(s, s') = c_j(b, s)$ and $1 \leq c_i(v, s') \leq n$.
This extends $F_j$ to $F_i \coloneqq F_j \cup \{v\}$.
\end{enumerate}

Each of these six constructions correspond to a possible way that $v$ can be connected to the complete solution constructed by the dynamic program.
See Figure \ref{introduce_fig} for an example of each case.
In I1 $v$ is directly connected to the partial solution at $\DP_i[T_i, c_i]$ via some edge in $E(B_i)$. 
In I2 and I3 $v$ can be thought of as the next vertex along the paths being built by the dynamic program. In I4, I5, and I6 $v$ is connected to the complete solution via some path that has yet to be built by the dynamic program.

\begin{figure}[!htb]
    \begin{subfigure}{.3\textwidth}
        \centering
        \includegraphics[scale=.25]{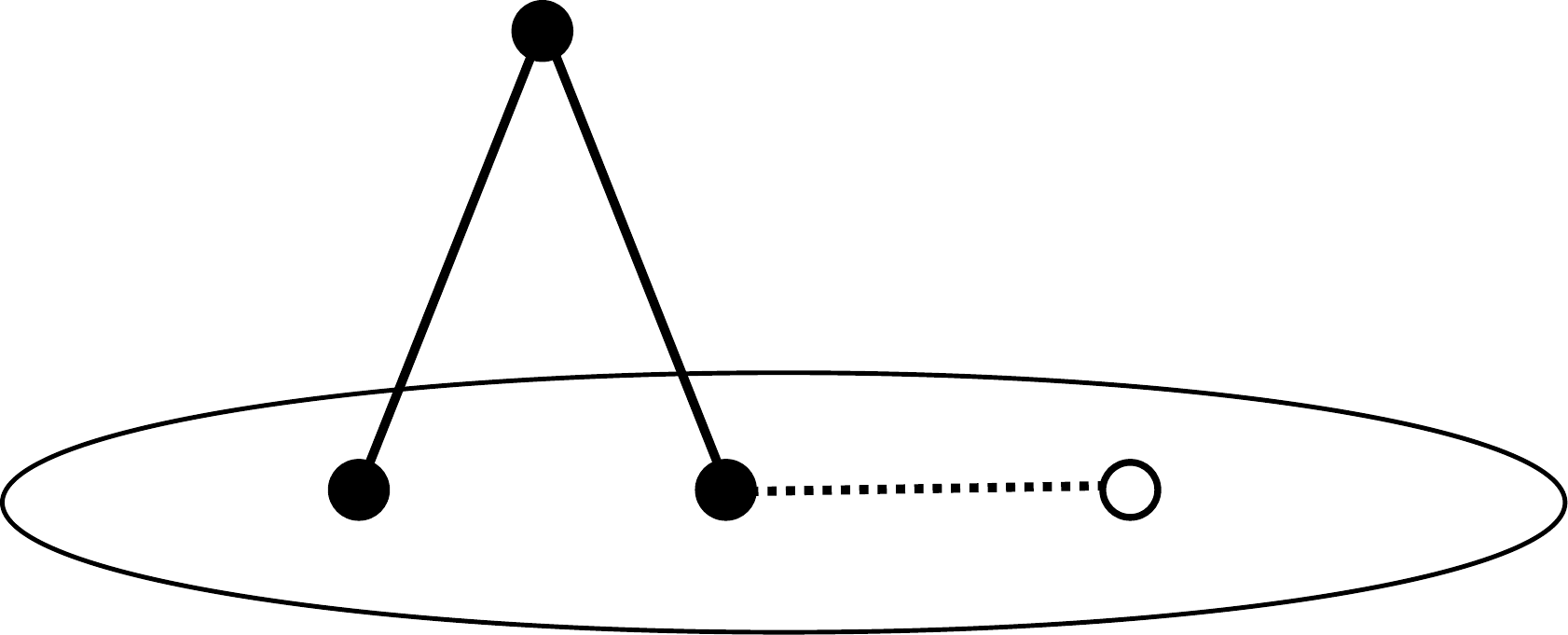}
        \renewcommand\thesubfigure{I1}
        \caption{The introduced vertex is attached to the spanning tree via an edge incident to a vertex contained in $B_i$.}
    \end{subfigure}\hspace{5mm}
    \begin{subfigure}{.3\textwidth}
        \centering
        \includegraphics[scale=.25]{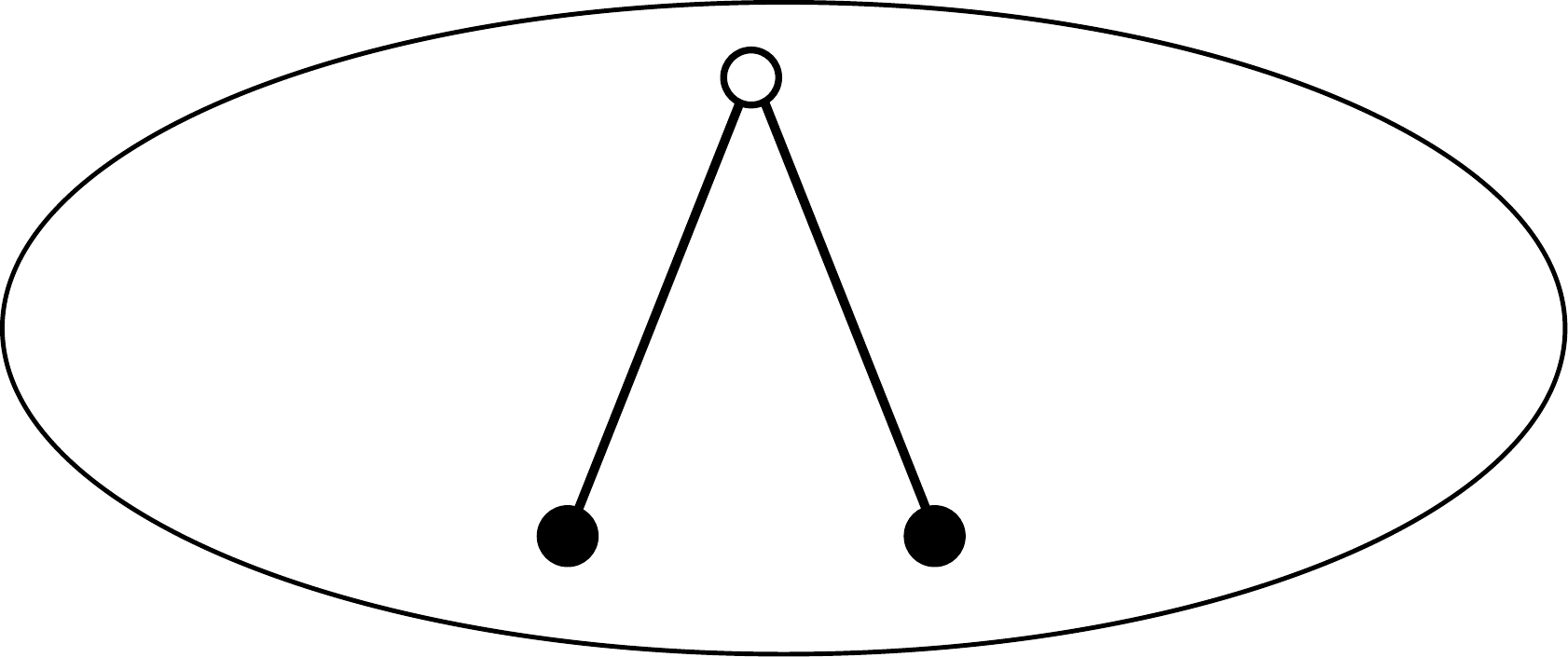}
        \renewcommand\thesubfigure{I2}
        \caption{The introduced vertex takes the role of a Steiner vertex in $S^A_i$.}
    \end{subfigure}\hspace{5mm}
    \begin{subfigure}{.3\textwidth}
        \includegraphics[scale=.25]{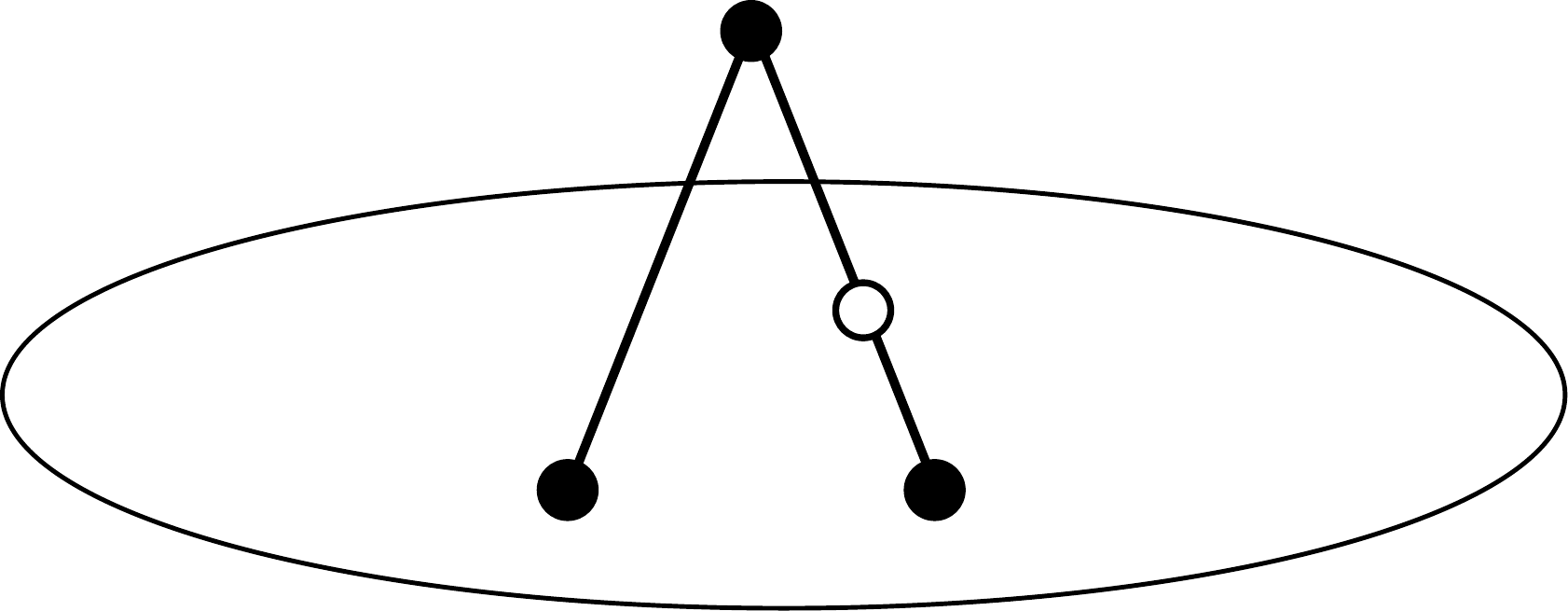}
        \renewcommand\thesubfigure{I3}
        \caption{The introduced vertex subdivides an edge between $B_i$ and $S^A_i$. The introduced vertex is the next vertex along a path being built by the dynamic program.}
    \end{subfigure}\hspace{5mm}
    \begin{subfigure}{.3\textwidth}
        \includegraphics[scale=.25]{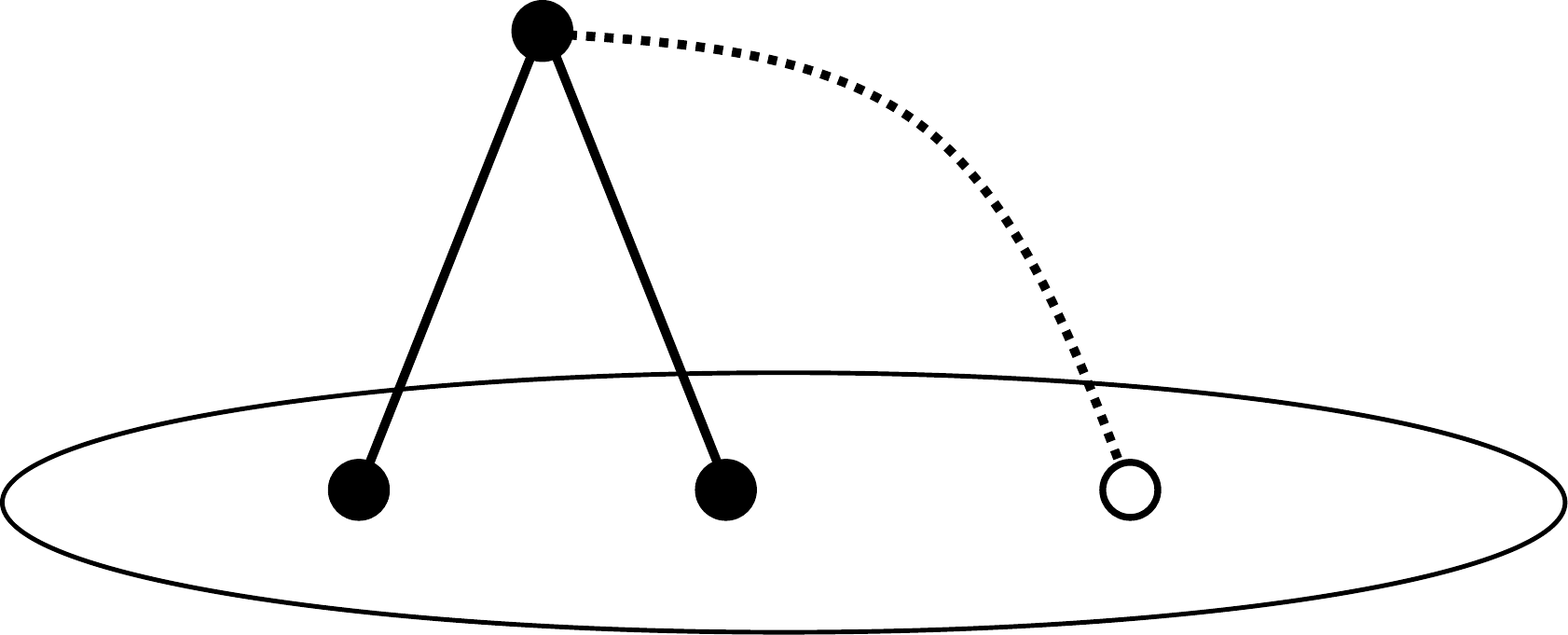}
        \renewcommand\thesubfigure{I4}
        \caption{The introduced vertex is attached to a Steiner vertex. This shows that the introduced vertex will be connected to the spanning tree along a path that the dynamic program has not yet initialized.}
    \end{subfigure}\hspace{5mm}
    \begin{subfigure}{.3\textwidth}
        \includegraphics[scale=.25]{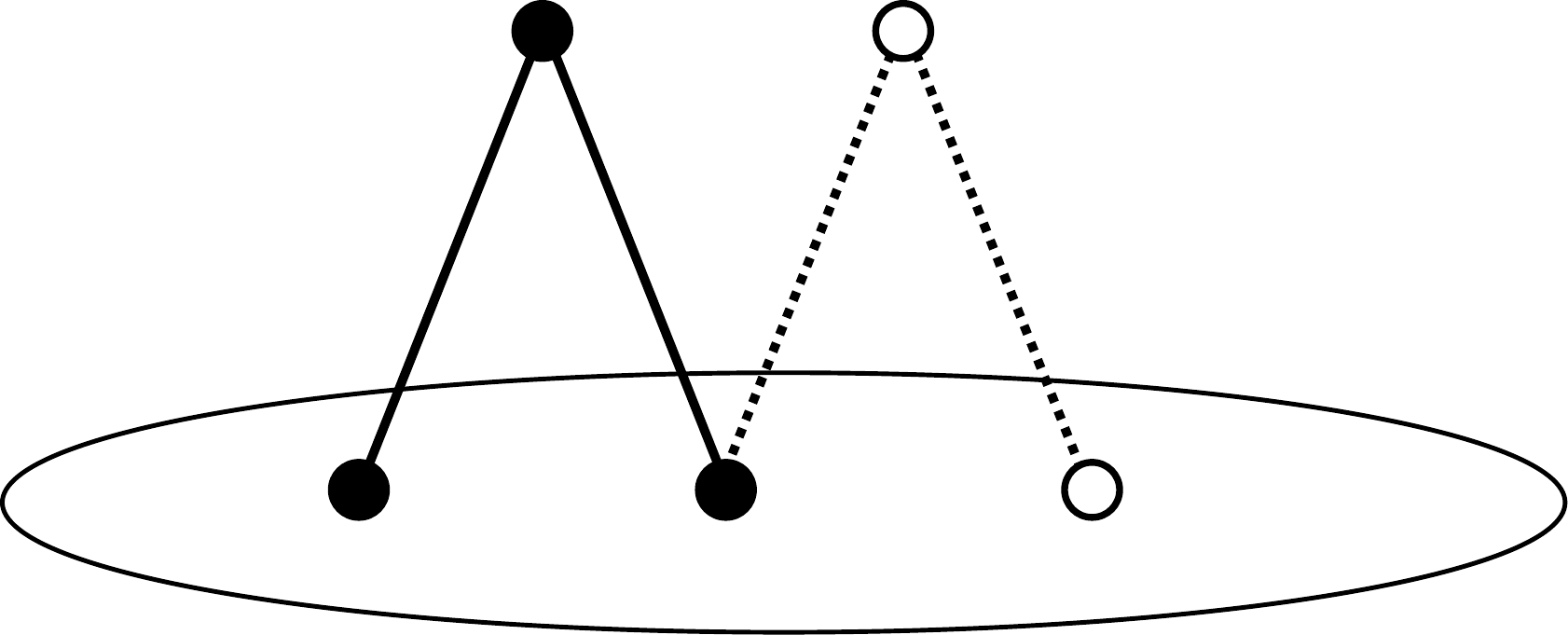}
        \renewcommand\thesubfigure{I5}
        \caption{The newly added Steiner vertex represents the intersection of two paths that have yet to be initialized by the dynamic program. The introduced vertex is connected to the spanning tree along one of these paths.}
    \end{subfigure}\hspace{5mm}
    \begin{subfigure}{.3\textwidth}
        \includegraphics[scale=.25]{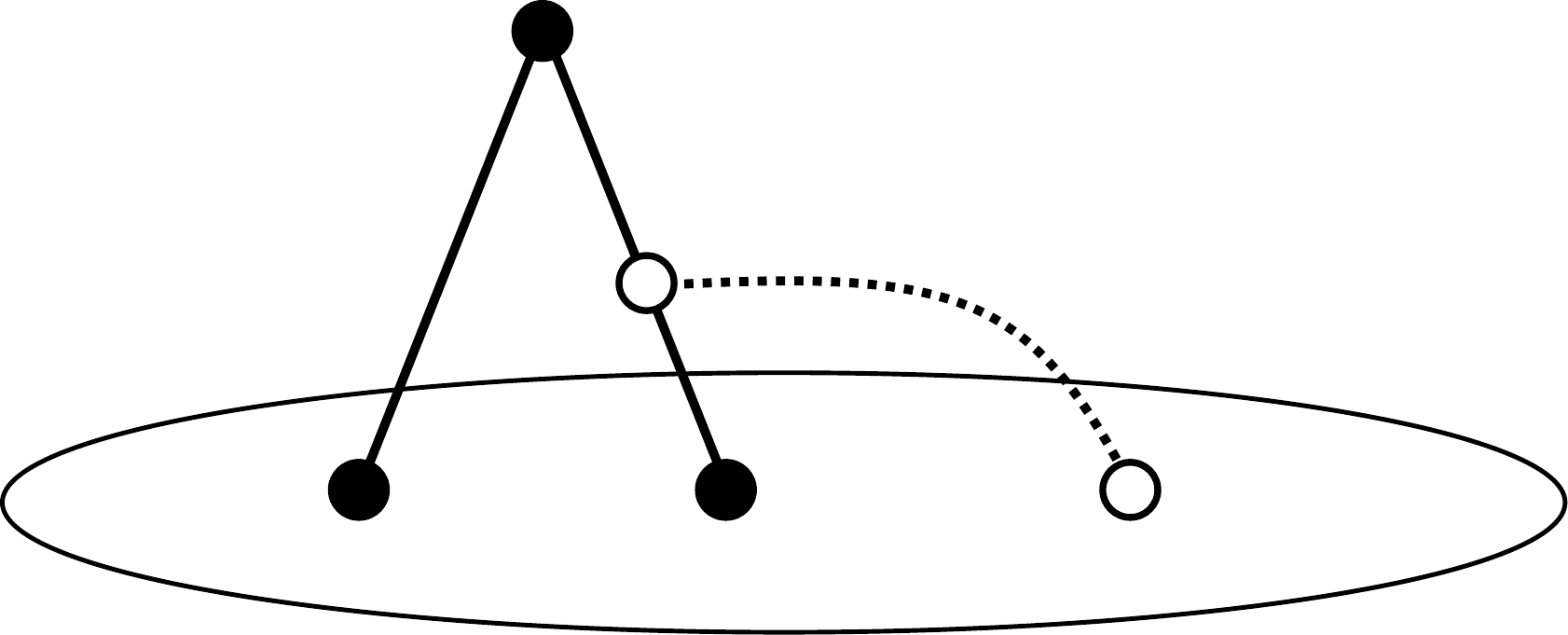}
        \renewcommand\thesubfigure{I6}
        \caption{The introduced vertex is connected to the spanning tree along a path that has not yet been initialized. The newly initialized path is attached to the spanning tree on a path that has already been initialized.}
    \end{subfigure}
    \caption{The six types of compatible configurations at an introduce node. The original tree consists of the black vertices and solid edges. The modifications are represented by the white vertices and dashed edges. The white vertex inside the circle is the vertex being introduced. The vertices enclosed in the circle are contained in the $B_i$ and the vertices above the circle are contained in $S^A_i$. }
    \label{introduce_fig}
\end{figure}

We now prove that these are the only six ways we can extend $F_j$ to $F_i$ while preserving the conformity. 
\begin{lemma}{introduceconform}
\label{introduce-conform}
Let $\mathcal{T}$ be a spanning tree of $G$ conforming to a configuration $(T_j, c_j)$ of the bag $B_j$. Let $B_i$ be the parent of $B_j$ introducing the vertex $v$. It follows that $\mathcal{T}$ conforms to a configuration $(T_i, c_i)$ of $B_i$ if and only if $(T_i, c_i)$ was constructed from $(T_j, c_j)$ via I1 through I6.
\end{lemma}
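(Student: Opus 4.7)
The plan is to establish the biconditional by a direct case analysis on the local structure of $\mathcal{T}$ around the introduced vertex $v$. Since $v \in B_i \setminus B_j$, the contractions applied to $\mathcal{T}$ when producing $T_j$ and $T_i$ differ only on edges incident to $v$, or to Steiner vertices acting as placeholders for $v$ in $T_j$; every edge and weight in $T_j$ not incident to such structures coincides with its counterpart in $T_i$. This localization principle is what makes a finite case analysis feasible and serves as the backbone of both directions.

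For the forward direction, the approach is to fix one of the six constructions I1--I6 and verify, by unfolding the contraction rules at $B_i$, that starting from $\mathcal{T}$ (which by hypothesis conforms to $(T_j, c_j)$) and recontracting with $v$ now protected yields exactly the described $(T_i, c_i)$. The argument in each case is mechanical: identify the portion of $T_j$ that implicitly involves $v$ (either $v$ was entirely contracted away and appears fresh in $T_i$, or $v$ was absorbed into the weight of some edge of $T_j$, or a Steiner vertex of $T_j$ stood in for $v$), and confirm that the local modification described by the corresponding rule matches the effect of protecting $v$ during contraction. The weight updates on edges of $T_i$ follow directly from how contraction sums edge costs along a subdivided path.

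For the reverse direction, fix any configuration $(T_i, c_i)$ that $\mathcal{T}$ conforms to at $B_i$ and argue that it must be obtainable from $(T_j, c_j)$ via precisely one of the six transformations. The classification is driven by the adjacency of $v$ in $T_i$ together with the weight function $c_i$: if $v$ has a neighbor $u \in B_j$ joined by a unit-weight edge, this is I1; if $v$ has a set of unit-weight neighbors in $B_j$ matching the neighborhood of a Steiner vertex of $T_j$, this is I2; if $v$ subdivides an edge of $T_j$ of weight greater than one between $B_j$ and $S^A_j$, it is I3 or I6 depending on whether $v$ introduces an extra branch to a Steiner vertex above; and if $v$ attaches only to Steiner vertices wholly above the bag, it is I4 or I5 depending on whether the Steiner vertex in question already existed in $T_j$ or is freshly introduced in $T_i$. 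One must additionally verify that these cases exhaust every possibility consistent with the edges of $G$ incident to $v$ going only to $B_i$, and that the tree-decomposition property prevents $v$ from being adjacent in $\mathcal{T}$ to anything outside of $A(B_i) \cup D(B_i)$.

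The main obstacle will be ensuring that the case split is both exhaustive and non-overlapping, and that the values of $c_i$ are correctly forced by $c_j$ in each case. To handle exhaustiveness I plan to exploit \autoref{treebound}: since the contracted tree is the unique minimal minor retaining the paths between bag vertices, the configuration $(T_i, c_i)$ is completely determined by $\mathcal{T}$ and $B_i$, so any valid $(T_i, c_i)$ must coincide with the output of exactly one of the six local modifications. The most delicate sub-argument is distinguishing I3 from I6: both involve $v$ lying along an edge from $B_j$ to a Steiner vertex, but in I6 the Steiner vertex has degree greater than two and must be split into two distinct Steiner vertices in $T_i$. The degree of the relevant Steiner vertex in $T_j$, together with the constraint that the induced weight function on the split must agree with $c_j$ on the original edge, provides the clean discriminator needed to make the case analysis rigorous.
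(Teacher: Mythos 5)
Your overall strategy---localizing the effect of introducing $v$ to edges incident to $v$ or to Steiner vertices standing in for it, and then casing on the local structure of $v$ in $T_i$---is the same as the paper's, and your forward direction is sound. However, your reverse-direction classification misreads I6. You bucket I6 with I3 under ``$v$ subdivides an edge of $T_j$ of weight greater than one between $B_j$ and $S^A_j$'' and propose to discriminate them by ``whether $v$ introduces an extra branch to a Steiner vertex above'' and by ``the Steiner vertex\dots being split into two distinct Steiner vertices.'' Neither description fits I6: there $v$ is a \emph{leaf} of $T_i$; the vertex that subdivides the edge $(b,s)$ is the newly introduced Steiner vertex $s'$, not $v$; and $s'$ is a fresh degree-3 vertex, not a split of the old $s$ (the hypothesis $\deg(s)>2$ is there to ensure that $s$ does not itself become a contractible degree-2 vertex after the operation). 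Thus I6 belongs with I4 and I5 in the case ``$v$ is a leaf of $T_i$ attached through Steiner structure above the bag,'' and the correct discriminator among I4, I5, I6 is whether $v$'s Steiner neighbor already existed in $T_j$ (I4), is new with its only other neighbor in $B_j$ (I5), or is new and subdivides an existing edge $(b,s)$ of $T_j$ (I6). As written, your case split does not correctly cover the leaf case---I6 is missing from where it should appear and wrongly placed where it cannot---so the reverse implication does not close without repairing this classification.
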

\begin{proof}
If $(T_i, c_i)$ was constructed from $(T_j, c_j)$ from one of the six methods described in the preceding subsection then either $T_i$ and $T_j$ are isomorphic (I2) and $\mathcal{T}$ conforms to $(T_i, c_i)$ or $T_i$ differs from $T_j$ by the inclusion of $v$, or the inclusion of $v$ and some Steiner vertex.
In I1, I4, I5, and I6 we have $\deg(v) = 1$ and $v$ is either adjacent to another vertex in $B_i$, a Steiner vertex with degree 2 whose second neighbor is in $B_i$, or a Steiner vertex of degree of degree at least 3. 
In each of these cases $\mathcal{T}$ conforms to $(T_i, c_i)$.
In I3 $v$ has degree two and is adjacent to a vertex in the bag and some Steiner vertex. This case is equivalent to subdividing the edge incident to the Steiner vertex to make $v$, hence the Steiner vertex still meets the conforming criteria.

Conversely, assume $\mathcal{T}$ conforms to $(T_i, c_i)$. 
If $v$ is a leaf in $T_i$ then it is connected to some other vertex in $B_i$ along some path consisting of zero or more Steiner vertices. Since $\mathcal{T}$ conforms to $(T_j, c_j)$ this path must have been contracted in $T_j$. Hence, to build $T_i$ we need to undo the contraction. This corresponds to I1, I4, I5, and I6.
If $v$ is an internal vertex in $T_i$ with $\deg(v) = 2$ with one neighbor in $S_i^A$ then $v$ must have been contracted when building $T_j$. In this case $T_i$ is built by undoing the contraction which corresponds to I3.
If $v$ is any other internal vertex in $T_i$ then it is contained in some path whose endpoints are in $B_j$. Moreover, its neighbors must also be contained in such a path otherwise they would have been contracted. It follows that $T_i$ is isomorphic to $T_j$ which corresponds to I1 where the only change is a relabeling of the vertices.
\end{proof}

The value of a subproblem at an introduce node is given by 
\begin{equation}
\label{introduce_eq}
\DP_i[T_i, c_i] = \min \left\{ \DP_j[T_j, c_j] + \sum_{e \in I(v)} \str_{T_i}(e) \right\}
\end{equation}
where the minimum is taken over all compatible configurations of $B_j$.
$F_i$ is constructed from $F_j$ and the inclusion of $v$. Since $D(B_i) = D(B_j) \cup \{v\}$ the inductive hypothesis implies that $F_i$ spans $D(B_i)$.
Finally, we show that the cost of $F_i$ is minimum over all forests meeting $(T_i, c_i)$.

\begin{lemma}
\label{introduce-stretch}
Fix a spanning tree $\mathcal{T}$ of $G$ and an introduce node $B_i$ with configuration $(T_i, c_i)$. If $\mathcal{T}$ conforms to $(T_i, c_i)$ then $\DP_i[T_i, c_i] \leq \sum_{e \in G[D(B_i)]} \str_{\mathcal{T}}(e)$.
\end{lemma}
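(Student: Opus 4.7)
The plan is to proceed by induction on the size of the subtree of the nice tree decomposition rooted at $B_i$, with leaf bags handled by the initialization $\DP_i[\{v\},\emptyset]=0$. For an introduce node $B_i$ with child $B_j$ adding the vertex $v$, suppose $\mathcal{T}$ conforms to $(T_i,c_i)$. The first step is to pass from $\mathcal{T}$ to a configuration $(T_j,c_j)$ of $B_j$ to which $\mathcal{T}$ also conforms, by applying the conformity contraction rules with respect to $B_j$ instead of $B_i$. Because $B_j=B_i\setminus\{v\}$, every contraction valid at $B_i$ remains valid at $B_j$, and possibly some paths through $v$ or its attached Steiner vertices collapse further, yielding $(T_j,c_j)$. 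By the ``only if'' direction of Lemma~\ref{introduce-conform}, the induced transition from $(T_j,c_j)$ to $(T_i,c_i)$ must realize one of the six moves I1--I6, so the pair is compatible in the sense required by the recurrence~\eqref{introduce_eq}.

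With this compatible pair in hand, I apply the inductive hypothesis at $B_j$ to obtain $\DP_j[T_j,c_j]\leq\sum_{e\in E(G[D(B_j)])}\str_{\mathcal{T}}(e)$, and then account for the edges that newly appear when $v$ enters. By property~3 of the tree decomposition, combined with $v$ being first introduced at $B_i$, every $G$-neighbor of $v$ lying in $D(B_i)$ must lie in $B_j$; hence
\[
E(G[D(B_i)])\;=\;E(G[D(B_j)])\ \sqcup\ \bigl(I(v)\cap E(G[B_i])\bigr).
\]
For each new edge $e=(v,u)$ with $u\in B_j$, the conformity of $\mathcal{T}$ to $(T_i,c_i)$ implies $\str_{T_i}(e)=\str_{\mathcal{T}}(e)$: the three contraction rules accumulate subpath lengths into the $c$-function precisely so that the $c_i$-weighted distance in $T_i$ between any two vertices of $B_i$ equals their distance in $\mathcal{T}$.

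Putting these ingredients into the recurrence~\eqref{introduce_eq}, which takes a minimum over all compatible pairs, the specific pair $(T_j,c_j)$ just produced yields
\[
\DP_i[T_i,c_i]\;\leq\;\DP_j[T_j,c_j]+\sum_{e\in I(v)}\str_{T_i}(e)\;\leq\;\sum_{e\in E(G[D(B_j)])}\str_{\mathcal{T}}(e)+\sum_{e\in I(v)\cap E(G[B_i])}\str_{\mathcal{T}}(e)\;=\;\sum_{e\in E(G[D(B_i)])}\str_{\mathcal{T}}(e),
\]
as desired. The main obstacle is the first step: verifying rigorously that contracting $\mathcal{T}$ with respect to $B_j$ yields a valid configuration that $\mathcal{T}$ conforms to, and that the move $(T_j,c_j)\to(T_i,c_i)$ really is captured by one of the six cases of Lemma~\ref{introduce-conform}---so that the minimum in the recurrence can be instantiated at precisely that pair. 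Once that is pinned down, the stretch identity for new edges and the final telescoping are essentially mechanical.
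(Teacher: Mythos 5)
Your proposal is correct and follows essentially the same inductive strategy as the paper's proof: find a compatible child configuration that $\mathcal{T}$ conforms to, apply the inductive hypothesis, account for the new edges incident to $v$, and feed into the recurrence. You are in fact slightly more careful than the paper in two places: you construct $(T_j, c_j)$ explicitly by contracting $\mathcal{T}$ with respect to $B_j$ (rather than starting from an arbitrary compatible $(T_j, c_j)$ and then arguing conformity), and you note explicitly that conformity forces $\str_{T_i}(e) = \str_{\mathcal{T}}(e)$ for the new edges, which is needed to match the recurrence's use of $\str_{T_i}$ against the target bound in $\str_{\mathcal{T}}$.
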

\begin{proof}
When $B_i$ is an introduce node we have $B_i = B_j \cup \{v\}$ where $B_j$ is the child of $B_i$.
Let $(T_j, c_j)$ be a configuration of $B_j$ that is compatible with $(T_i, c_i)$.
We need to show that if $\mathcal{T}$ conforms to $(T_i, c_i)$ then $\mathcal{T}$ also conforms to $(T_j, c_j)$.
Since $(T_i, c_i)$ and $(T_j, c_j)$ are compatible $T_i$ differs from $T_j$ by at most the inclusion of $v$ and possibly a Steiner vertex $s$ adjacent to $v$. By contracting the newly added edges incident to $s$ and $v$ we see that $\mathcal{T}$ conforms to $(T_j, c_j)$.
By the inductive hypothesis we have $\DP_j[T_j, c_j] \leq \sum_{e \in G[D(B_j)]} \str_{\mathcal{T}}(e)$.
Since $D(B_i) = D(B_j) \cup \{v\}$ it follows that \[\DP_i[T_i,c_i] \leq \DP_j[T_j, c_j] + \sum_{e \in I(v)\cap B_j} \str_{\mathcal{T}}(e) \leq \sum_{e \in G[D(B_i)]} \str_{\mathcal{T}}(e).\]
\end{proof}

\subsection{Forget nodes}
When $B_i$ is a forget node with child $B_j$ we have $B_i = B_j \setminus \{v\}$ where $v$ is the vertex being forgotten in $B_i$.
Let $(T_i, c_i)$ and $(T_j, c_j)$ be configurations of $B_i$ and $B_j$.
We say $(T_i, c_i)$ and $(T_j, c_j)$ are compatible with one another if $(T_i, c_i)$ can be constructed from $(T_j, c_j)$ in the following way.
\begin{enumerate}

\item[F1.] If $v$ is a leaf construct $T_i$ by contracting the edge incident to $v$. If this edge is incident to a Steiner vertex of degree $2$ contract it as well.

\item[F2.] If $v$ is an internal vertex let $S \subseteq S^D_j$ be the set of Steiner vertices with degree 2 adjacent to $v$. Construct $T_i$ by contracting each edge $(v, s)$ for $s \in S$. Set $S^D_i \coloneqq S^D_j \cup \{v\}$ and $c_i(v, s') \coloneqq c_j(v, s) + c_j(s, b)$ where $b \in B_j$ is the other neighbor of $s$.
\end{enumerate}

\begin{figure}[!htb]
    \begin{subfigure}{.5\textwidth}
        \centering
        \includegraphics[scale=.3]{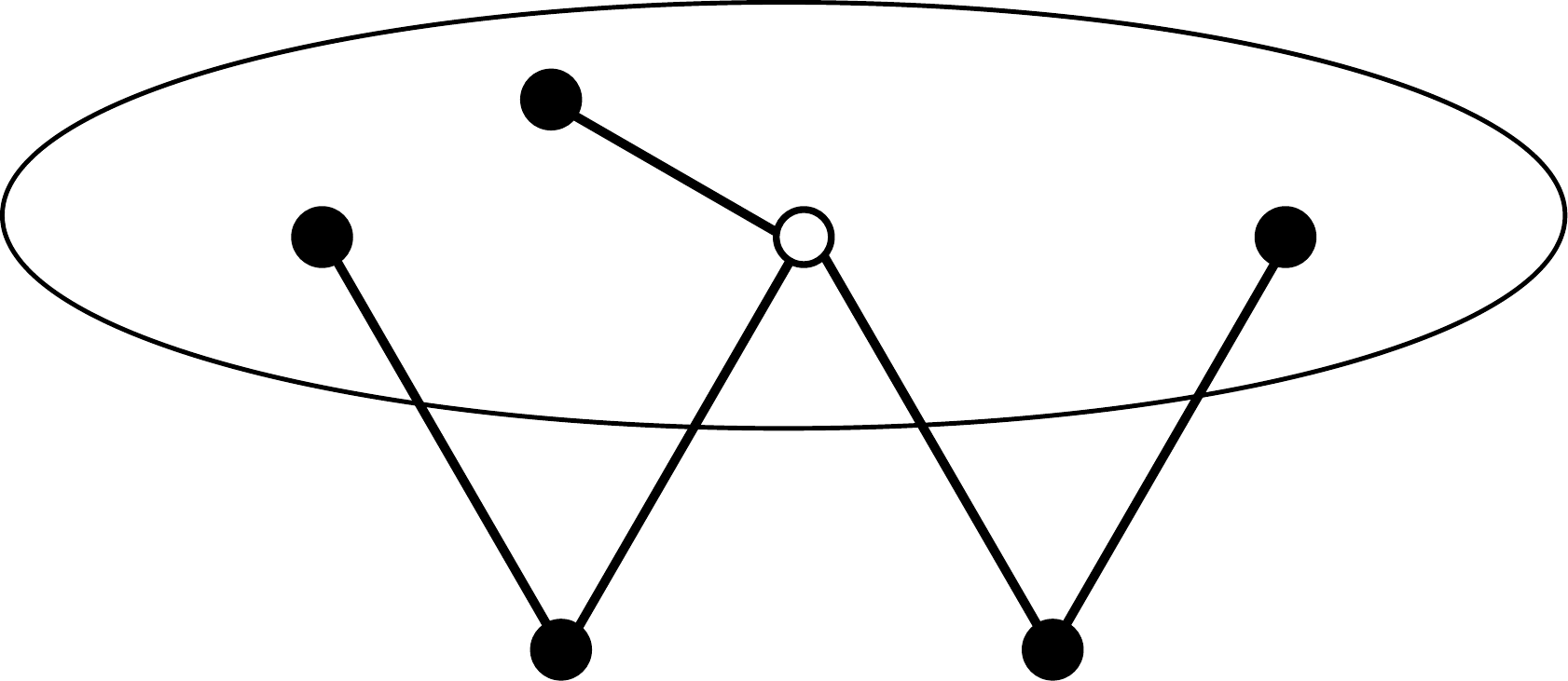}
    \end{subfigure}
    \begin{subfigure}{.5\textwidth}
        \centering
        \includegraphics[scale=.3]{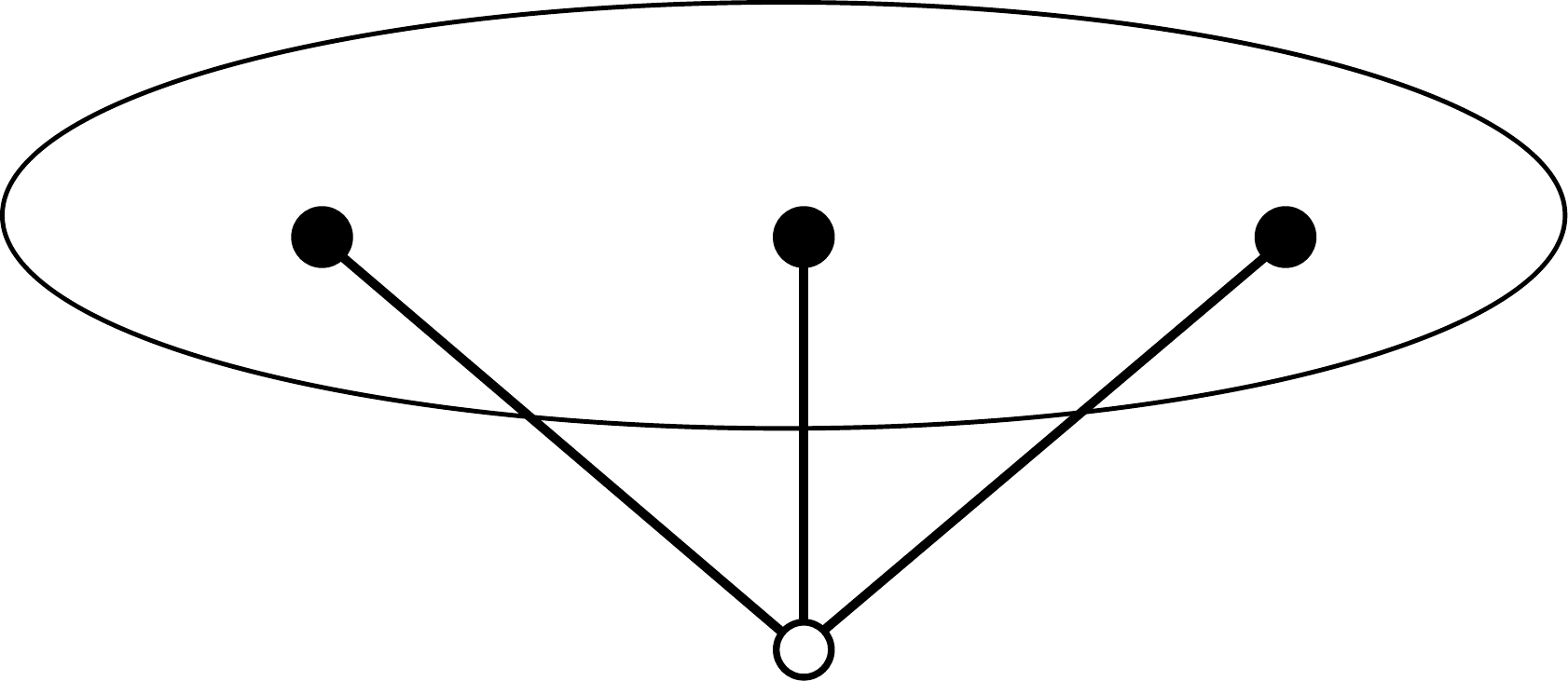}
    \end{subfigure}
    \caption{A pair of compatible configurations at a forget node. The figure on the left is the original tree, and the white vertex is the vertex being forgotten. The figure on the right is the result of the contraction.}
\end{figure}

\begin{lemma}\label{forget-conform}
Let $\mathcal{T}$ be a spanning tree of $G$ conforming to a configuration $(T_j, c_j)$ of the bag $B_j$. Let $B_i$ be the parent of $B_j$ forgetting the vertex $v$. It follows that $\mathcal{T}$ conforms to a configuration $(T_i, c_i)$ of $B_i$ if and only if $(T_i, c_i)$ was constructed from $(T_j, c_j)$ via F1 or F2.
\end{lemma}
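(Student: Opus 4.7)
The plan is to mirror the structure of Lemma~\ref{introduce-conform}, proving both directions of the biconditional by leveraging the characterization of conforming as applying contraction rules~(1)--(3) to exhaustion relative to the given bag. The first step is to record the key monotonicity observation: since $B_i = B_j \setminus \{v\}$, every contraction rule that is applicable for bag $B_j$ is still applicable for bag $B_i$, and possibly more. Consequently $T_i$ must be a minor of $T_j$ obtained by the additional contractions that become valid only after $v$ is removed from the bag, namely those involving edges or Steiner vertices whose retention in $T_j$ was witnessed by a path ending at $v$.

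For the ``if'' direction, I would assume $\mathcal{T}$ conforms to $(T_j, c_j)$ and verify that exhaustively applying the $B_i$-contractions to $\mathcal{T}$ produces the tree described by F1 or F2. Since the $B_j$-contractions already carry $\mathcal{T}$ to $T_j$, it suffices to analyze the effect of the newly available $B_i$-contractions on $T_j$. If $v$ is a leaf (case F1), its single incident edge no longer lies on any $B_i$-to-$B_i$ path, so rule~(1) contracts it; if the neighbor was a Steiner vertex of degree exactly $3$ in $T_j$, it now has degree $2$ and is absorbed by rule~(2) or~(3), with weights updated as specified. If $v$ is internal (case F2), $v$ remains as a branching point but is reclassified into $S^D_i$; the edges joining $v$ to its degree-$2$ Steiner neighbors now run between two non-bag vertices with one endpoint of degree~$2$, so rule~(2) contracts them and accumulates their weights into the combined edge, exactly as F2 prescribes.

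For the ``only if'' direction, I would assume $\mathcal{T}$ conforms to both $(T_j, c_j)$ and $(T_i, c_i)$ and show that $T_i$ can be obtained from $T_j$ by F1 or F2. Since exhaustive contraction produces a unique minor given a bag, $T_i$ is forced to equal whatever tree results from applying the newly available $B_i$-rules to $T_j$. A case split on whether $v$ is a leaf or internal in $T_j$ finishes the argument: the leaf case admits only the contractions described in F1, and the internal case admits only those in F2. The weight bookkeeping in each case is then forced by the prescriptions of rules~(2) and~(3).

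The main obstacle I anticipate is justifying in the internal case that $v$ itself must persist in $T_i$ as a Steiner vertex rather than being contracted away into a neighbor. To handle this, I will observe that every neighbor of $v$ in $T_j$ that is not a degree-$2$ Steiner vertex either lies in $B_i$ or is itself a Steiner vertex of degree at least~$3$; any such neighbor continues to witness a distinct $B_i$-to-$B_i$ path passing through $v$, so collapsing $v$ into any one of them would merge two separate paths and violate the minimality of $T_i$ as a minor preserving the path structure between vertices of $B_i$. This observation pins down F2 as the unique legal transformation when $v$ is internal and completes the case analysis.
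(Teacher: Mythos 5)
Your proposal follows the same structure as the paper's (rather terse) argument: prove both directions of the biconditional, using the fact that forgetting $v$ from the bag makes more contraction rules applicable, so that $T_i$ is obtained from $T_j$ by the additional contractions of edges incident to $v$, which F1 and F2 encode. Your explicit monotonicity and confluence observations are exactly what the paper leaves implicit, and the leaf/internal case split is the right organizing principle. So the overall plan is sound and essentially the same as the paper's.

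However, your resolution of the ``main obstacle'' in the final paragraph has a real gap. You argue that $v$ must persist as a Steiner vertex in $T_i$ because each remaining neighbor of $v$ witnesses a distinct $B_i$-to-$B_i$ path through $v$, so collapsing $v$ would merge two paths. This reasoning is only sound when $v$, after the absorption of its degree-$2$ Steiner neighbors, has degree at least $3$. If $v$ ends up with degree exactly $2$ --- say one neighbor $u\in B_i$ and one neighbor a Steiner vertex of degree $\geq 3$ --- then there is only one $B_i$-to-$B_i$ path through $v$, and contraction rule (3) applies to the edge $(u,v)$ because $u\in B_i$, $v\notin B_i$, and $\deg(v)=2$. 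Exhaustive contraction would then absorb $v$ into $u$, whereas F2 as you describe it moves $v$ into $S^D_i$ and keeps it. Your ``distinct paths'' argument does not rule this out; you would need to either show that $v$ cannot have degree $2$ with a non-bag neighbor in $T_j$ (e.g.\ by a structural invariant maintained by the conformity rules), or extend the internal-case analysis to cover the subsequent contraction of $v$ itself. (The paper's own one-line justification of this direction glosses over the same subtlety, so the flaw is inherited rather than introduced, but it does need to be filled.)
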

\begin{proof}
Assume $\mathcal{T}$ conforms to $(T_i, c_i)$. Since $\mathcal{T}$ conforms to $(T_j, c_j)$ and $B_j \setminus B_i = \{v\}$ it follows that $T_i$ differs from $T_j$ by the contraction of edges incident to $v$. These edges are the edges contracted by rules F1 and F2.
Conversely, assume $(T_i, c_i)$ was constructed from $(T_j, c_j)$ by either F1 or F2.
Since F1 and F2 apply the contraction rules for conformity on the edges incident to $v$ it follows that $\mathcal{T}$ conforms to $(T_i, c_i)$.
\end{proof}

The value of a subproblem at a forget node is given by the recurrence
\begin{equation}
\label{forget_eq}
\DP_i[T_i, c_i] = \min \DP_j[T_j, c_j]
\end{equation}
where the minimum is taken over all $(T_j, c_j)$ compatible with $(T_i, c_i)$.
We set $F_i \coloneqq F_j$ where $F_j$ is the partial solution stored in the minimum $\DP_j[T_j, c_j]$.
Since $D(B_i) = D(B_j)$ it follows inductively that $F_i$ spans $D(B_i)$. We now use the inductive hypothesis to prove that $F_i$ is the minimum cost forest meeting $(T_i, c_i)$.

\begin{lemma}\label{forget-stretch}
Fix a spanning tree $\mathcal{T}$ of $G$ and a forget node $B_i$ with configuration $(T_i, c_i)$. If $\mathcal{T}$ conforms to $(T_i, c_i)$ then $\DP_i[T_i, c_i] \leq \sum_{e \in G[D(B_i)]} \str_{\mathcal{T}}(e)$.
\end{lemma}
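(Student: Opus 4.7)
The plan is to mirror the proof of \autoref{introduce-stretch}, using induction on the tree decomposition together with the ``only if'' direction of \autoref{forget-conform} and the recurrence \eqref{forget_eq}. The inductive hypothesis is that for every child bag $B_j$ and every configuration $(T_j, c_j)$ to which $\mathcal{T}$ conforms, $\DP_j[T_j, c_j] \leq \sum_{e \in G[D(B_j)]} \str_{\mathcal{T}}(e)$; the base case is a leaf bag, where the configuration is trivial and both sides are zero.

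First I would produce a configuration $(T_j, c_j)$ of the child bag $B_j$ to which $\mathcal{T}$ also conforms, by applying the contraction rules that define conformity to $\mathcal{T}$ relative to $B_j$ rather than $B_i$. This yields a well-defined tuple, and by construction $\mathcal{T}$ conforms to it. Next I would check that $(T_j, c_j)$ and $(T_i, c_i)$ are compatible in the sense of F1 or F2: since $B_j = B_i \cup \{v\}$, the only additional contractions performed when passing to the coarser view from $B_i$ are those touching $v$. If $v$ is a leaf of $T_j$ then F1 applies; if $v$ is internal in $T_j$ then $v$ becomes a Steiner vertex in $S_i^D$ and F2 applies, with the cost function updated by $c_i(v, s') = c_j(v, s) + c_j(s, b)$ exactly as prescribed. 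That these are the only possibilities is precisely the content of \autoref{forget-conform}.

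Finally, by the inductive hypothesis applied at $B_j$, $\DP_j[T_j, c_j] \leq \sum_{e \in G[D(B_j)]} \str_{\mathcal{T}}(e)$. Since $D(B_i) = D(B_j)$ at a forget node (forgetting $v$ from the bag does not remove $v$ from the descendants), and the recurrence \eqref{forget_eq} ensures $\DP_i[T_i, c_i] \leq \DP_j[T_j, c_j]$, chaining these inequalities gives
\[
\DP_i[T_i, c_i] \;\leq\; \DP_j[T_j, c_j] \;\leq\; \sum_{e \in G[D(B_j)]} \str_{\mathcal{T}}(e) \;=\; \sum_{e \in G[D(B_i)]} \str_{\mathcal{T}}(e),
\]
as required.

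The main obstacle will be bookkeeping around the cost function: one must verify that when F2 absorbs a degree-two Steiner vertex $s$, the new weight $c_j(v,s) + c_j(s,b)$ still records the true length of the contracted $\mathcal{T}$-path through the corresponding Steiner chain, so that stretches computed through $(T_i, c_i)$ agree with those computed through $(T_j, c_j)$. Since the original contraction rules defining conformity were designed precisely to preserve these path lengths, this reduces to unwinding the definitions, but it is the only subtle point in an otherwise routine induction.
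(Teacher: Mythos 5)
Your proposal mirrors the paper's own argument: construct the child configuration $(T_j, c_j)$ to which $\mathcal{T}$ conforms, invoke \autoref{forget-conform} to certify compatibility, use $D(B_i) = D(B_j)$, and chain the recurrence \eqref{forget_eq} with the inductive hypothesis. If anything, your use of $\leq$ (rather than the paper's stated $=$) when passing from $\DP_i$ to $\DP_j$ via the minimization is the cleaner bookkeeping; the approach is the same.
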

\begin{proof}
When $B_i$ is a forget node we have $B_i = B_j \setminus \{v\}$ where $B_j$ is the child of $B_i$, hence $D(B_i) = D(B_j)$.
If $\mathcal{T}$ conforms to $(T_i, c_i)$ then $\mathcal{T}$ conforms to some configuration $(T_j, c_j)$ of $B_j$. The configuration $(T_j, c_j)$ can be found by undoing the contractions made by F1 and F2 and choosing the minimum such $\DP_j[T_j, c_j]$.
It follows that $(T_i, c_i)$ and $(T_j, c_j)$ are compatible, hence $\DP_i[T_i, c_i] = \DP_j[T_j, c_j]$.
Applying the inductive hypothesis $\DP_j[T_j, c_j] \leq \sum_{e \in G[D(B_j)]} \str_{\mathcal{T}}(e)$ proves the claim.
\end{proof}

\subsection{Join nodes}
When $B_i$ is a join node with children $B_j$ and $B_k$ we have $B_i = B_j = B_k$.
Given a configuration $(T_i, c_i)$ of $B_i$ we show how to build compatible configurations $(T_j, c_j)$ and $(T_k, c_k)$ of $B_j$ and $B_k$.
At a join node we decide which previously computed paths in the partial solutions at $B_j$ and $B_k$ to keep in the partial solution at $B_i$.

For a fixed configuration $(T, c)$ of a bag $B$ let $\mathcal{S}$ be the set of maximal, connected, induced subgraphs of $S^D$.
We \textit{invert} a tree $S \in \mathcal{S}$ by setting $S^D \coloneqq S^D \setminus S$ and $S^A \coloneqq S^A \cup S$.
If $(u, v) \in E(S)$ or $(u, v)$ has $u \in S_i^D$ and $v \in B$ we add $(u, v)$ to $E(S^A)$.
Moreover, we do not change the value of $c(u, v)$.
Inverting $S$ does not change the structure of the tree it only changes the way we interpret the Steiner vertices in $S$.

We enumerate over the subsets $\mathcal{S}'$ of $\mathcal{S}$. In one child of $B_i$ we invert $\mathcal{S}'$ and in the other we invert $\mathcal{S} \setminus \mathcal{S}'$.
For each configuration $(T_i, c_i)$ of $B_i$ and subset of trees $\mathcal{S}' \subset \mathcal{S}$ we build a compatible triplet of configurations in the following way.
Define $T_j$ to be the tree constructed by inverting $\mathcal{S}'$ in $T_i$ and $T_k$ to be the tree constructed by inverting $\mathcal{S} \setminus \mathcal{S}'$ in $T_i$.
The cost functions $c_j$ and $c_k$ are inherited from $c_i$.

\begin{figure}[!htb]
    \begin{subfigure}{.5\textwidth}
        \centering
        \includegraphics[scale=.3]{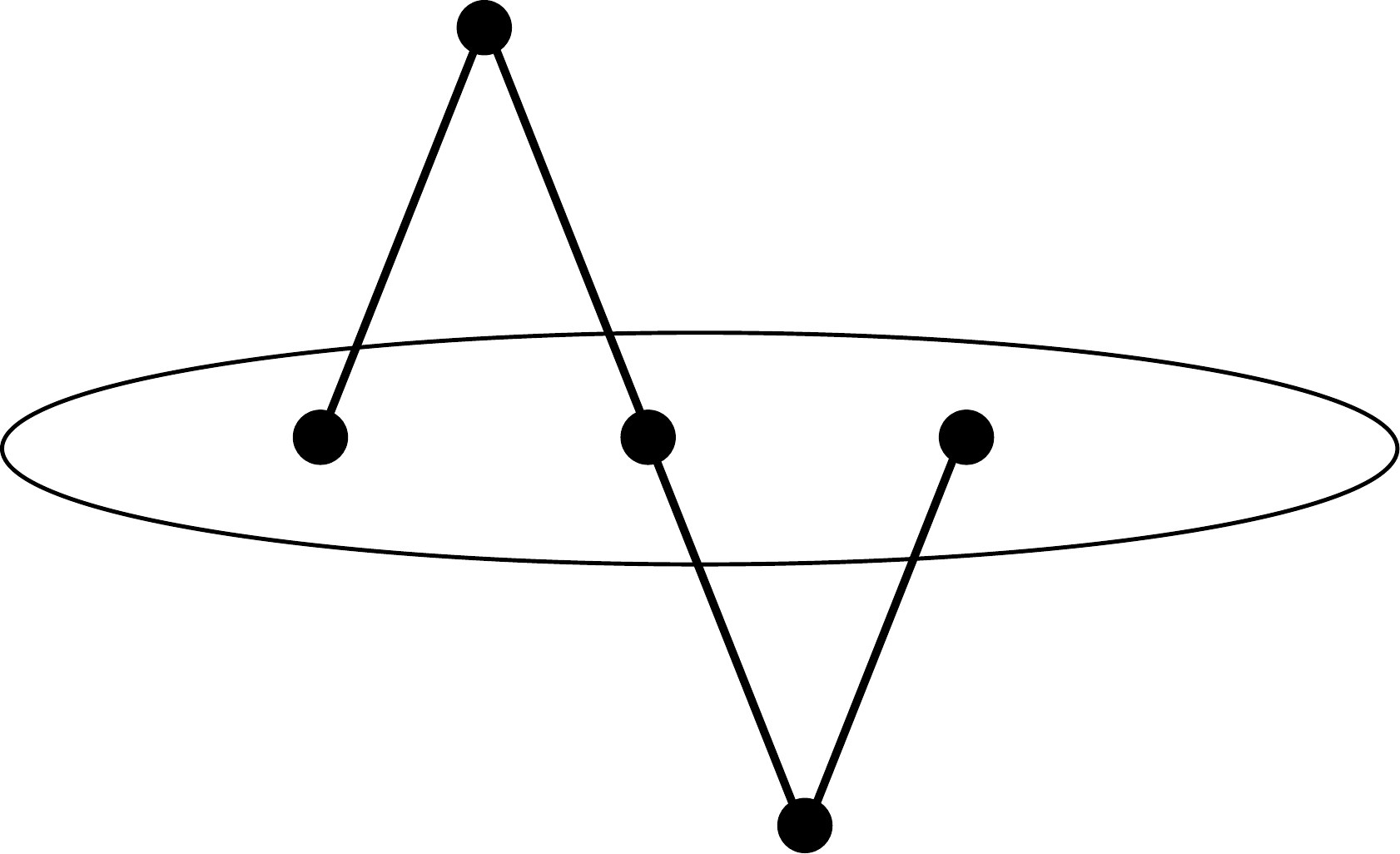}
        \renewcommand\thesubfigure{I1}
    \end{subfigure}
    \begin{subfigure}{.5\textwidth}
        \centering
        \includegraphics[scale=.3]{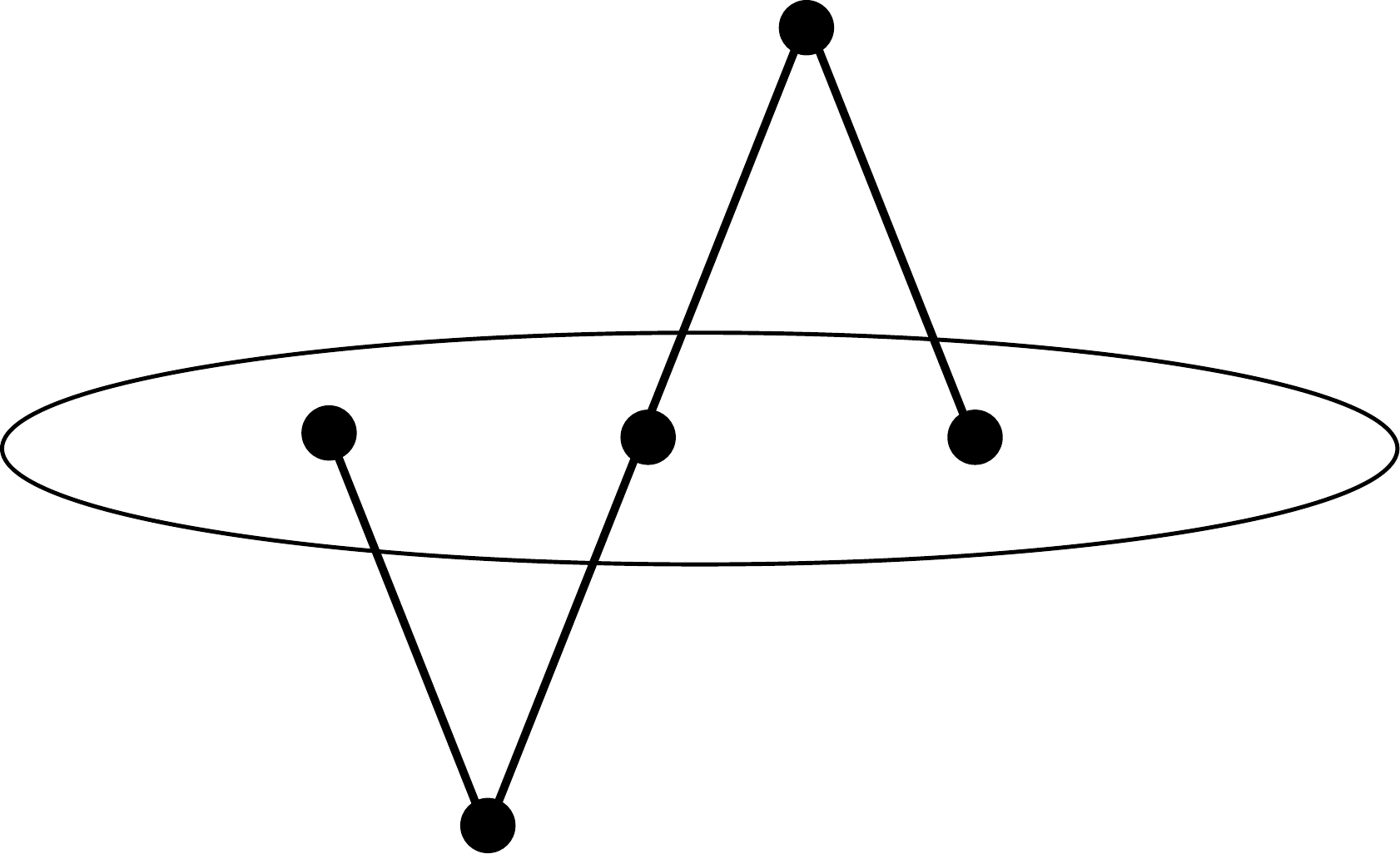}
        \renewcommand\thesubfigure{I2}
    \end{subfigure}
    \caption{A pair of compatible configurations corresponding to an inverted tree.}
\end{figure}

Fix a tree $S \in \mathcal{S}'$. The configuration $(T_j, c_j)$ is anticipating the construction of a subtree isomorphic to $S$ in order to connect the vertices in $B_j$.
Similarly, the configuration $(T_k, c_k)$ has already constructed a subtree isomorphic to $S$ connecting the vertices in $B_k$. Since $D(B_j) \setminus B_i$ and $D(B_k) \setminus B_i$ are disjoint we can safely merge the solutions to form the solution at the configuration $(T_i, c_i)$.
Hence, the stretch of the partial solution at a join node is given by 
\begin{equation}
\label{join_eq}
\DP_i[T_i, c_i] = \min \left\{ \DP_j[T_j, c_j] + \DP_k[T_k, c_k] - \sum_{e \in E(B_i)} \str_{T_i}(e)\right\}.
\end{equation}
The minimization is taken over all triplets of compatible configurations.
We subtract \[\sum_{e \in E(B_i)} \str_{T_i}(e)\] to prevent double counting the stretch of the edges in $B_i$ since $B_i = B_j = B_k$.
If $F_j$ and $F_k$ are the partial solutions at $\DP_j[T_j, c_j]$ and $\DP_k[T_k, c_k]$ then the result of the join node is $F_i \coloneqq F_j \cup F_k$.
By induction $F_j$ spans $D(B_j)$ and $F_k$ spans $D(B_k)$, hence $F_i$ spans $D(B_i)$.

\begin{lemma}{lemma}\label{join-stretch}
Fix a spanning tree of $\mathcal{T}$ of $G$ and a join node $B_i$ with configuration $(T_i, c_i)$. If $\mathcal{T}$ conforms to $(T_i, c_i)$ then $\DP_i[T_i, c_i] \leq \sum_{e \in G[D(B_i)]} \str_{\mathcal{T}}(e)$.
\end{lemma}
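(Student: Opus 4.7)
The plan is to mirror the structure of Lemmas~\ref{introduce-stretch} and~\ref{forget-stretch}. The first step is to exploit property~3 of the tree decomposition: since $B_i = B_j = B_k$ and the bags containing any fixed vertex form a subtree, we have $D(B_j) \cap D(B_k) = B_i$. Consequently $G[D(B_i)] = G[D(B_j)] \cup G[D(B_k)]$ with overlap exactly $G[B_i]$, giving the inclusion--exclusion identity
\[
\sum_{e \in G[D(B_i)]} \str_{\mathcal{T}}(e) = \sum_{e \in G[D(B_j)]} \str_{\mathcal{T}}(e) + \sum_{e \in G[D(B_k)]} \str_{\mathcal{T}}(e) - \sum_{e \in E(B_i)} \str_{\mathcal{T}}(e).
\]
Moreover, conformity of $\mathcal{T}$ to $(T_i, c_i)$ implies that for every edge $e \in E(B_i)$ the value $\str_{\mathcal{T}}(e)$ equals the $c_i$-weighted distance in $T_i$ between $e$'s endpoints, i.e.\ $\str_{T_i}(e)$, since $T_i$ was defined as the minimal minor retaining the weighted path structure between vertices of $B_i$.

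Next, I would exhibit compatible configurations $(T_j, c_j)$ and $(T_k, c_k)$ to which $\mathcal{T}$ also conforms. The decisive observation is that every Steiner vertex of $S^D_i$ is an internal vertex of a $\mathcal{T}$-path whose non-bag portion lies in $V(G) \setminus B_i$. Because $B_i$ separates $D(B_j) \setminus B_i$ from $D(B_k) \setminus B_i$, any such path lies entirely on one side, so each maximal connected subtree $S \in \mathcal{S}$ of $S^D_i$ sits in exactly one of $D(B_j) \setminus B_i$ or $D(B_k) \setminus B_i$. Let $\mathcal{S}'$ collect those components living in $D(B_k) \setminus B_i$. Forming $(T_j, c_j)$ by inverting $\mathcal{S}'$ in $(T_i, c_i)$ and $(T_k, c_k)$ by inverting $\mathcal{S} \setminus \mathcal{S}'$ yields the desired compatible triplet: on each child side, the components corresponding to $\mathcal{T}$-paths that have already been realized below that bag remain in $S^D$, while those that are realized on the opposite side are re-classified to $S^A$, which is exactly the semantics of inversion.

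With these configurations in hand, the induction hypothesis gives $\DP_j[T_j, c_j] \leq \sum_{e \in G[D(B_j)]} \str_{\mathcal{T}}(e)$ and the analogous bound for $\DP_k[T_k, c_k]$. Since the minimum in (\ref{join_eq}) is at most the value obtained from the specific triplet constructed above, subtracting $\sum_{e \in E(B_i)} \str_{T_i}(e) = \sum_{e \in E(B_i)} \str_{\mathcal{T}}(e)$ to correct for the double counting, the inclusion--exclusion identity from the first paragraph closes the argument.

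The main obstacle will be a clean verification that $\mathcal{T}$ really does conform to both $(T_j, c_j)$ and $(T_k, c_k)$ under this inversion. This amounts to checking that restricting the conforming contractions of $\mathcal{T}$ to each side of the separator $B_i$ produces precisely the tree obtained by reclassifying the Steiner components on the opposite side from $S^D$ to $S^A$, and that the cost function is preserved because inversion leaves all $c$-values unchanged. The bookkeeping is routine but must be spelled out carefully to align with the definitions of $S^A$ and $S^D$ used in the introduce- and forget-node cases.
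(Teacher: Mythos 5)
Your proof follows the same overall skeleton as the paper's: inclusion--exclusion over $G[D(B_j)]$, $G[D(B_k)]$, and $G[B_i]$ (which hinges, as you note, on property~3 giving $D(B_j)\cap D(B_k)=B_i$), followed by the inductive hypothesis at the children and the join recurrence~(\ref{join_eq}). The place where you diverge is the conformity verification. You carefully identify \emph{which} compatible triple to use by sorting the Steiner components of $S^D_i$ according to whether they live in $D(B_j)\setminus B_i$ or $D(B_k)\setminus B_i$, then argue that $\mathcal{T}$ conforms to the specific $(T_j,c_j)$ and $(T_k,c_k)$ obtained by inverting each side's ``missing'' components; you flag at the end that this ``bookkeeping is routine but must be spelled out carefully.'' The paper takes a shortcut that dissolves the obstacle you flag: it observes that \emph{every} compatible triple $(T_i,c_i),(T_j,c_j),(T_k,c_k)$ has isomorphic underlying weighted trees (inversion only relabels Steiner vertices between $S^A$ and $S^D$; it does not change $T$ or $c$), and since conformity is defined solely by whether $\mathcal{T}$ contracts to the weighted tree---not by the $S^A/S^D$ labeling---conformity to $(T_i,c_i)$ automatically transfers to every compatible child configuration. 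This renders your careful choice of $\mathcal{S}'$ unnecessary for the bound (any compatible triple already works, and the join recurrence takes a minimum over all of them), and it avoids the bookkeeping you identify as the remaining gap. Your side-based choice is still the ``semantically right'' triple in the sense that it is the one actually attained by the optimum, but for this upper bound the isomorphism observation is cleaner and closes the argument in one stroke.
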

\begin{proof}
Let $B_i$ be a join node with children $B_j$ and $B_k$ with configurations $(T_j, c_j)$ and $(T_k, c_k)$. When $(T_i, c_i)$, $(T_j, c_j)$, and $(T_k, c_k)$ are compatible with each other the trees $T_i$, $T_j$, and $T_k$ are isomorphic since they only differ by the labeling of the Steiner vertices.
Hence, if $\mathcal{T}$ conforms to $(T_i, c_i)$ also conforms to $(T_j, c_j)$ and $(T_k, c_k)$.
By the inductive hypothesis we have $\DP_j[T_j, c_j] \leq \sum_{e \in G[D(B_j)]} \str_{\mathcal{T}}(e)$ and $\DP_k[T_k, c_k] \leq \sum_{e \in G[D(B_k)]} \str_{\mathcal{T}}(e)$.
From the equality \[\sum_{e \in G[D(B_k)]} \str_{\mathcal{T}}(e) + \sum_{e \in G[D(B_k)]} \str_{\mathcal{T}}(e) - \sum_{e \in G[B_i]} \str_{\mathcal{T}}(e) = \sum_{e \in G[D(B_i)])} \str_{\mathcal{T}}(e)\]
it follows that \[\DP_i[T_i, c_i] \leq \DP_j[T_j, c_j] + \DP_k[T_k, c_k] - \sum_{e \in G[B_i])} \str_{\mathcal{T}}(e) \leq \sum_{e \in G[D(B_i)])} \str_{\mathcal{T}}(e),\]
which proves the theorem.
\end{proof}

\subsection{Correctness}
Let $B_r$ be the root node of the tree decomposition of $G$. Without loss of generality we can assume that $B_r$ is a forget node containing one vertex $v_r$.
The only configuration on $B_r$ is $(\{v_r\}, \emptyset)$ which is a single vertex. Since every spanning tree of $G$ conforms to $(\{v_r\}, \emptyset)$ the solution indexed at $\DP_r[\{v_r\}, \emptyset]$ must be a minimum stretch spanning tree of $G$.

\subsection{Runtime analysis}
In this section we analyze the runtime of our dynamic program on a graph $G$ with treewidth $k$.
We begin by analyzing the size of the three dimensional array $\DP_i[T, c]$.
The subscript $i$ represents a bag in the nice tree decomposition of $G$.
It is known that a graph with $n$ vertices has a nice tree decomposition of width $k$ with at most $4n$ bags \cite{treewidth}.
It follows from Lemma \ref{treebound} that any tree $T$ used as an index in our array has at most $2k$ vertices.
By Cayley's formula there are at most $(2k)^{2k - 2} = O(2^{2k} k^{2k})$ such trees that will ever be built as an index by our dynamic program.
The cost function $c$ has domain $E(T)$ which has size $k$. The range of $c$ is $\{1,\dots,n\}$ since the value of the cost function is only ever incremented by one when an edge is contracted. Hence the total number of possible cost functions is $n^k$.
We conclude that the total size of our dynamic programming table is $O(2^{2k} k^{2k} n^{k+1})$.

Next we analyze the complexity of filling in the entries of our dynamic programming table. We will need to analyze introduce, forget, and join nodes as separate cases.
In each case we find the compatible configurations of the child nodes by undoing the operations described in the previous section.

At an introduce node $B_i$ we compute the value of $\DP_i[T_i, c_i]$ by undoing the six methods used to build a pair of compatible configurations. For each $v \in V(T_i) \cap B_i$ we transform $(T_i, c_i)$ into $(T_j, c_j)$ by reversing the methods described in the introduce nodes section with $v$ being treated as the vertex introduced to $B_i$. When $v$ is a leaf in $T_i$ or an internal vertex with one neighbor in $B_i$ this is done by contracting the added edges. Otherwise, we take $(T_j, c_j) \coloneqq (T_i, c_i)$. Hence, equation \ref{introduce_eq} takes the minimum over $O(k)$ compatible configurations.

When $B_i$ is a forget node forgetting a vertex $v$ there are two methods for finding compatible configurations of $(T_i, c_i)$.
In the case that $v$ was a leaf in $T_j$ we attach $v$ to each of the $O(k)$ vertices in $V(T_i) \cap B_i$ to construct each possible compatible configuration $(T_j, c_j)$. We have to consider the two cases where $v$ is adjacent to the vertex in $V(T) \cap B_j$ and where there exists one intermediate Steiner vertex of degree $2$ in between them.
In the case that $v$ was an internal vertex we consider each of the $O(k)$ Steiner vertices in $S_i^D$ that are adjacent to some vertex in $V(T_i) \cap B_i$ via some edge of cost $1$. To undo the operation we subdivide each of its incident edges with cost greater than $1$ whose endpoint is in $B_i$.
It follows that equation \ref{forget_eq} takes its minimum over $O(k)$ compatible configurations.

When $B_i$ is a join node there is a pair of compatible configurations for each of the $O(2^k)$ subsets of $S_i^D$. It follows that equation \ref{join_eq} takes the minimum over $O(2^k)$ values.
Computing the entries of $\DP_i[T, c]$ is dominated by the time it takes to compute the value at join nodes.
We have now proven the main theorem of the section.

\bibliographystyle{plain}
\bibliography{fcb}

\begin{thebibliography}{10}

\bibitem{Abraham08NearTightStretchST}
Ittai Abraham, Yair Bartal, and Ofer Neiman.
\newblock Nearly tight low stretch spanning trees.
\newblock In {\em Proceedings of the 2008 49th Annual IEEE Symposium on
  Foundations of Computer Science}, FOCS '08, pages 781--790, Washington, DC,
  USA, 2008. IEEE Computer Society.

\bibitem{Abraham12PetalLowStretST}
Ittai Abraham and Ofer Neiman.
\newblock Using petal-decompositions to build a low stretch spanning tree.
\newblock In {\em Proceedings of the Forty-fourth Annual ACM Symposium on
  Theory of Computing}, STOC '12, pages 395--406, New York, NY, USA, 2012. ACM.

\bibitem{Alon95GraphTheoGame}
Noga Alon, Richard~M. Karp, David Peleg, and Douglas~B. West.
\newblock A graph-theoretic game and its application to the k-server problem.
\newblock {\em {SIAM} J. Comput.}, 24(1):78--100, 1995.

\bibitem{Arnborg1987}
Stefan Arnborg, Derek~G. Corneil, and Andrzej Proskurowski.
\newblock {Complexity of Finding Embeddings in a $k$-tree}.
\newblock {\em SIAM. J. on Algebraic and Discrete Methods}, 8(2):277--284,
  1987.

\bibitem{Bartal13BandwdithLowDimEmb}
Yair Bartal, Douglas~E. Carroll, Adam Meyerson, and Ofer Neiman.
\newblock Bandwidth and low dimensional embedding.
\newblock {\em Theor. Comput. Sci.}, 500:44--56, 2013.

\bibitem{nice}
Hans~L. Bodlaender, Paul Bonsma, and Daniel Lokshtanov.
\newblock The fine details of fast dynamic programming over tree
  decompositions.
\newblock In Gregory Gutin and Stefan Szeider, editors, {\em Parameterized and
  Exact Computation}, pages 41--53, Cham, 2013. Springer International
  Publishing.

\bibitem{Bollobas98ModernGraphTheo}
Bela Bollobas.
\newblock {\em Modern Graph Theory}.
\newblock Springer, 1998.

\bibitem{Carroll06EmbBandwidthL1}
Douglas~E. Carroll, Ashish Goel, and Adam Meyerson.
\newblock Embedding bounded bandwidth graphs into $\ell_1$.
\newblock In {\em Proceedings of the 33rd International Conference on Automata,
  Languages and Programming - Volume Part I}, ICALP'06, pages 27--37, Berlin,
  Heidelberg, 2006. Springer-Verlag.

\bibitem{Cassell_1974}
A.~C. Cassell, J.~C. de~C. Henderson, and A.~Kaveh.
\newblock Cycle bases for the flexibility analysis of structures.
\newblock {\em International Journal for Numerical Methods in Engineering},
  8(3):521--528, 1974.

\bibitem{fan}
F.~R.~K. Chung.
\newblock On the cutwidth and topological bandwidth of a tree.
\newblock {\em SIAM Journal of Algebraic Discrete Methods}, 6:268--277, 1985.

\bibitem{Elkin08LowStretchST}
Michael Elkin, Yuval Emek, Daniel~A. Spielman, and Shang-Hua Teng.
\newblock Lower-stretch spanning trees.
\newblock {\em SIAM J. Comput.}, 38(2):608--628, May 2008.

\bibitem{Emek11KOuterLowStretchST}
Yuval Emek.
\newblock $k$-outerplanar graphs, planar duality, and low stretch spanning
  trees.
\newblock {\em Algorithmica}, 61(1):141--160, Sep 2011.

\bibitem{int-mst}
Michael~L. Fredman and Dan~E. Willard.
\newblock Trans-dichotomous algorithms for minimum spanning trees and shortest
  paths.
\newblock {\em J. Comput. Syst. Sci.}, 48(3):533--551, June 1994.

\bibitem{Fruchterman91GraphDraw}
Thomas M.~J. Fruchterman and Edward~M. Reingold.
\newblock Graph drawing by force-directed placement.
\newblock {\em Softw. Pract. Exper.}, 21(11):1129--1164, November 1991.

\bibitem{Gavril1977}
F.~Gavril.
\newblock {Some NP-complete problems on graphs}.
\newblock {\em 11th Conference on Information Sciences and Systems}, pages
  91--95, 1977.

\bibitem{Gleiss01PhD}
Petra~M. Gleiss.
\newblock {\em Short Cycle}.
\newblock PhD thesis, Universitat Wien, 2001.

\bibitem{Gupta04CutsTreesL1Emb}
Anupam Gupta, Ilan Newman, Yuri Rabinovich, and Alistair Sinclair.
\newblock Cuts, trees and $\ell_1$-embeddings of graphs.
\newblock {\em Combinatorica}, 24(2):233--269, Apr 2004.

\bibitem{linearLCA}
Dov Harel.
\newblock A linear time algorithm for the lowest common ancestors problem.
\newblock {\em 21st Annual Symposium on Foundations of Computer Science}, 1980.

\bibitem{treewidth}
Tom Kloks.
\newblock {\em Treewidth: Computations and Approximations}.
\newblock Springer-Verlag, 1994.

\bibitem{Kohler04MinTotalDelay}
Ekkehard K{\"{o}}hler, Rolf~H. M{\"{o}}hring, and Gregor W{\"{u}}nsch.
\newblock Minimizing total delay in fixed-time controlled traffic networks.
\newblock In {\em {OR}}, pages 192--199, 2004.

\bibitem{MillerSDD}
I.~{Koutis}, G.~L. {Miller}, and R.~{Peng}.
\newblock Approaching optimality for solving sdd linear systems.
\newblock In {\em 2010 IEEE 51st Annual Symposium on Foundations of Computer
  Science}, pages 235--244, 2010.

\bibitem{Koutis11KMPSolver}
Ioannis Koutis, Gary~L. Miller, and Richard Peng.
\newblock A nearly-$m\log n$ time solver for sdd linear systems.
\newblock In {\em Proceedings of the 2011 IEEE 52Nd Annual Symposium on
  Foundations of Computer Science}, FOCS '11, pages 590--598, Washington, DC,
  USA, 2011. IEEE Computer Society.

\bibitem{Lee13PathwidthTreeRE}
James~R. Lee and Anastasios Sidiropoulos.
\newblock Pathwidth, trees, and random embeddings.
\newblock {\em Combinatorica}, 33(3):349--374, June 2013.

\bibitem{Liebchen07PriodicTimetable}
Christian Liebchen.
\newblock Periodic timetable optimization in public transport.
\newblock In Karl-Heinz Waldmann and Ulrike~M. Stocker, editors, {\em
  Operations Research Proceedings 2006}, pages 29--36, Berlin, Heidelberg,
  2007. Springer Berlin Heidelberg.

\bibitem{Papadmimitriou1976}
Ch.~H. Papadimitriou.
\newblock { The NP-completeness of the bandwidth minimization problem}.
\newblock {\em Computing}, 16(3):263--270, 1976.

\bibitem{Peleg98MinCommST}
David Peleg and Eilon Reshef.
\newblock Deterministic polylog approximation for minimum communication
  spanning trees.
\newblock In {\em {ICALP}}, volume 1443 of {\em Lecture Notes in Computer
  Science}, pages 670--681. Springer, 1998.

\bibitem{Reich14PhD}
Alexander Reich.
\newblock {\em Cycle Bases of Graphs and Spanning Trees with Many Leaves}.
\newblock PhD thesis, BTU Cottbus, 2014.

\end{thebibliography}
\end{document}